\newtheorem{proposition}{Proposition}[section]
\newtheorem{lemma}[proposition]{Lemma}
\newtheorem{corollary}[proposition]{Corollary}
\newtheorem{theorem}[proposition]{Theorem}
\theoremstyle{definition}
\newtheorem{definition}[proposition]{Definition}
\theoremstyle{remark}
\newcommand{\thlabel}[1]{\label{th:#1}}
\newcommand{\thref}[1]{Theorem~\ref{th:#1}}
\newcommand{\selabel}[1]{\label{se:#1}}
\newcommand{\seref}[1]{Section~\ref{se:#1}}
\newcommand{\lelabel}[1]{\label{le:#1}}
\newcommand{\leref}[1]{Lemma~\ref{le:#1}}
\newcommand{\colabel}[1]{\label{co:#1}}
\newcommand{\coref}[1]{Corollary~\ref{co:#1}}
\newcommand{\delabel}[1]{\label{de:#1}}
\title{\textbf{Mixed Paraparticles, Colors, Braidings and a new class of Realizations for Lie superalgebras}}
\author{\textsc{K. Kanakoglou} \\ kanakoglou@hotmail.com {\small and}
kanakoglou@ifm.umich.mx \\
\small{Instituto de F\'{\i}sica y Matem\'{a}ticas - \textsc{Ifm}},  \\
\small{Universidad Michoacana de San Nicolas de Hidalgo - \textsc{Umsnh}} \\
\small{Edificio C-3, Cd. Universitaria, CP 58040},
\small{Morelia, Michoac\'{a}n, \textsc{Mexico}} \\
\footnotesize{and:} \\
\small{Phys. Dept.},
\small{Aristotle University of Thessaloniki - \textsc{Auth}} \\
\small{54124 Thessaloniki, \textsc{Greece}} \\ \\
\textsc{C. Daskaloyannis} - daskalo@math.auth.gr \\ \small{Math. Dept.},
\small{Aristotle University of Thessaloniki - \textsc{Auth}} \\
\small{54124 Thessaloniki, \textsc{Greece}} \\  \\
\textsc{A. Herrera-Aguilar} - herrera@ifm.umich.mx \\
\small{Instituto de F\'{\i}sica y Matem\'{a}ticas - \textsc{Ifm}}, \\
\small{Universidad Michoacana de San Nicol\'{a}s de Hidalgo - \textsc{Umsnh}} \\
\small{Edificio C-3, Cd. Universitaria, CP 58040},
\small{Morelia, Michoac\'{a}n, \textsc{Mexico}}}
\date{}
\begin{document}

\maketitle

\begin{abstract}
A rigorous algebraic description of the notion of realization, specialized in the
case of Lie superalgebras is given. The idea of the Relative Parabose set $P_{BF}$ is recalled together with some
recent developments and its braided group structure is established together with an extended discussion
of its ($\mathbb{Z}_{2} \times \mathbb{Z}_{2}$)-grading. The final result of the paper
employs $P_{BF}$ in order to realize an arbitrary Lie superalgebra. It is
furthermore shown that the constructed realization is a $\mathbb{Z}_{2}$-graded Hopf algebra homomorphism.
\end{abstract}

\section{Introduction: the concept of realization for a Lie superalgebra} \selabel{realizationdef}

The concept of ``realizing'' an algebra of operators via bosons or fermions has proved a very fruitful
approach in both the levels of first and second quantization theories. The relevant examples are numerous:
In the review article \cite{KlMa} and the references therein one can find a host of applications in models of
nuclear physics, in \cite{Ca} such techniques are applied in solid state physics problems, while a review
in various areas of physics can be found in \cite{Iache} -just to mention a few. Such techniques are equally
useful in problems of purely mathematical character: The famous Jordan-Schwinger map, which was introduced in 1935
by Pascual Jordan \cite{Jor} for any Lie algebra and was ``rediscovered'' by Shwinger in 1953 for the special example
of $su(2)$, led to the construction of the symmetric and the antisymmetric representations of Lie algebras.

The idea of realizing an algebra of operators in terms of an other algebra (usually bosonic or fermionic algebras)
roughly amounts in constructing an homomorphism between our initial algebra and the ``target'' algebra. In this
way, the relations of the initial algebra can be directly reproduced by the relations of the target algebra (which are either
easier to handle or more well known). Moreover, the representations of the target algebra give rise -in a straightforward way-
to representations of the initial algebra, enabling us to gain information about the spectrum, the eigenstates and thus the
expected values of various physical quantities. The dynamical variables of any physical theory frequently have the
structure of the universal enveloping algebra (UEA) of some Lie algebra and this is exactly what initiated the interest
in realizations of Lie algebras. Since the introduction of the idea of supersymmetry, Lie superalgebras and their UEAs
have dramatically come into play. Nowadays the Lie superalgebraic structure can be found throughout models of quantum field theory,
elementary particle physics, solid state physics etc.

Let $L = L_{0} \oplus L_{1}$ be a Lie superalgebra (equivalently: a $\mathbb{Z}_{2}$-graded Lie algebra) with $L_{0}$ its even
and $L_{1}$ its odd part, $\langle .., .. \rangle : L \times L \rightarrow L$ its non-associative
multiplication, $\mathbb{U}(L)$ its universal enveloping algebra UEA (which is well known to be an associative superalgebra or
equivalently a  $\mathbb{Z}_{2}$-graded associative algebra) and $i_{\mathbb{U}(L)} : L \rightarrow \mathbb{U}(L)$ the canonical
injection of $L$ into $\mathbb{U}(L)$. We know that $i_{\mathbb{U}(L)}$ is a linear map of super vector spaces or equivalently
an even linear map or equivalently an homogeneous linear map of degree $0$. We recall here that in $\mathbb{U}(L)$ we will have the
relations
\begin{equation} \label{Liesupercom}
\langle x, y \rangle = xy - (-1)^{deg(x)deg(y)}yx
\end{equation}
for any $x,y$ homogeneous elements of $L$ or equivalently homogeneous generators of $\mathbb{U}(L)$.

If $\mathcal{A}_{super}$ is any associative superalgebra and $J_{L} : L \rightarrow \mathcal{A}_{super}$ an even linear map which
furthermore satisfies
\begin{equation} \label{suffcondreal}
J_{L}( \langle x, y \rangle ) = J_{L}(x)J_{L}(y) - (-1)^{deg(x)deg(y)} J_{L}(y)J_{L}(x)
\end{equation}
for any homogeneous elements $x, y \in L$, then there exists a unique homomorphism of associative superalgebras (equivalently: an
even homomorphism of associative algebras or: an homogeneous homomorphism of associative algebras of degree $0$)
$J : \mathbb{U}(L) \rightarrow \mathcal{A}_{super}$ which extends the linear map $J_{L}$, in such a way that the following diagram
becomes commutative
\begin{equation}
\begin{array}{ccc}
\xymatrix{L \ar[rr]^{J_{L}} \ar[d]_{i_{\mathbb{U}(L)}} & & \mathcal{A}_{super} \\
 \mathbb{U}(L) \ar@{.>}[urr]_{\exists ! \ J} & & }
  & \textrm{ή ισοδύναμα :} & J \circ i_{\mathbb{U}(L)} = J_{L}
\end{array}
\end{equation}
$J$ is uniquely determined by its values on  the generators of $\mathbb{U}(L)$ which is the values of $J_{L}$ on the elements of $L$.
We can now state the following

\begin{definition} \delabel{realization}
Under the above conditions, the unique homomorphism of associative algebras $J : \mathbb{U}(L) \rightarrow \mathcal{A}_{super}$
(or sometimes by abuse of terminology even the linear map $J_{L} : L \rightarrow \mathcal{A}_{super}$)
will be called a \emph{realization} of the Lie superalgebra $L$ in terms of the elements of the algebra $\mathcal{A}_{super}$.
\end{definition}

If furthermore, the associative superalgebra $\mathcal{A}_{super}$ is an algebra consisting of bosons or fermions or some mixture of
them  (with any gradation satisfying the above) then the homomorphism $J$ will be called a realization of the Lie superalgebra $L$
with bosons or fermions or with particles. Correspondingly, if $\mathcal{A}_{super}$ is an algebra consisting of parabosons or
parafermions or some mixture of them (with the necessary gradation) then we will be speaking about a realization of the
Lie superalgebra $L$ with parabosons or parafermions or paraparticles.

Although Lie superalgebra realizations is a much more recent topic than Lie algebra realizations, there is already a significant
number of references on this subject. Far from trying to present an exhaustive list, we feel it's worth underlining the readers attention to
the general works \cite{Hong, Pal17, ChangS, Tang} as well as the review \cite{FrSorSc1} and the references therein. In these
works and -to the best of our knowledge- in any similar reference in the literature, the target algebra is almost always some
mixture of bosons and fermions (ordinary or deformed). In some cases, algebras of differential operators -acting on suitable polynomial
spaces- have also been used \cite{Hong}. We can thus speak about \emph{particle realizations} or \emph{differential realizations} of Lie superalgebras.
The novel thing in this article will be the use of a mixed paraparticle system as a target algebra. Consequently we will deal with
\emph{paraparticle realizations} of an arbitrary Lie superalgebra.

In the next section (\seref{relparabset}) we will present our target algebra which will be the Relative parabose set $P_{BF}$
and recall some recent developments about its structure. Motivated by
these developments, we will make an extended discussion of its ($\mathbb{Z}_{2} \times \mathbb{Z}_{2}$)-graded structure and we will prove two lemmas:
one of them establishing a braided group structure for $P_{BF}$ and another one specifying suitable subalgebras that will be of use in the last section.

In the last section (\seref{newresults}) the Relative parabose set will be used in order for a new family of realizations to be constructed for an
arbitrary Lie superalgebra of either finite or infinite dimension. The only assumption will be the existence of a finite dimensional graded representation
(i.e.: a graded-matrix representation) for the Lie superalgebra. We will furthermore show that the constructed realization is an homomorphism between
braided groups and more specifically that it is an homomorphism of super-Hopf algebras. The section will close by studying some consequences
which show that the constructed realization generalizes previous results of ours and of other authors as well.

\section{Relative parabose set $P_{BF}$: a $\theta$-braided group} \selabel{relparabset}

The \emph{Relative Parabose Set} has been introduced by Greenberg and Messiah at their seminal paper \cite{GreeMe}.
It has been historically the only -together with a couple of other models introduced in the same paper-
attepmt for a mixture of parabosonic and parafermionic degrees of freedom. However, the notation used in \cite{GreeMe}
to describe the algebra is compact and rather awkward. We present here the Relative Parabose Set algebra (we are going to
denote it by $P_{BF}$ from now on) in terms of generators and relations, and we will write down the relations in detail for
convenience:

$P_{BF}$ is generated -as an associative algebra- by the generators $B_{i}^{+}$, $B_{j}^{-}$, $F_{k}^{+}$, $F_{l}^{-}$
for all values $i,j,k,l = 1, 2, ...$.
In other words it will be an algebra with an infinite number of generators in the general case. If we consider finite number
of generators, for example $m$ parabosonic generators $B_{i}^{\pm}$ ($i = 1, 2, ..., m$) and $n$ parafermionic generators
$F_{j}^{\pm}$ ($j = 1, 2, ..., n$) we will use the notation $P_{BF}^{(m,n)}$.
The relations satisfied by the above generators will be:

The usual relations of the parabosonic and the parafermionic algebras which can be compactly summarized as
\small{\begin{equation} \label{parab-paraf}
\begin{array}{c}
\big[ \{ B_{i}^{\xi},  B_{j}^{\eta}\}, B_{k}^{\epsilon}  \big] = (\epsilon - \eta)\delta_{jk}B_{i}^{\xi} + (\epsilon -
 \xi)\delta_{ik}B_{j}^{\eta}    \\
    \\
\big[ [ F_{i}^{\xi},  F_{j}^{\eta} ], F_{k}^{\epsilon}  \big] = \frac{1}{2}(\epsilon - \eta)^{2} \delta_{jk}F_{i}^{\xi} - \frac{1}{2}(\epsilon -
 \xi)^{2} \delta_{ik}F_{j}^{\eta}   \\
 \end{array}
\end{equation}}
for all values $i, j, k = 1, 2, ..., $ and $\xi, \eta, \epsilon = \pm$ together with the mixed trilinear relations which characterize the
relative parabose set
\small{\begin{equation} \label{parabparaf}
\begin{array} {ccc}
\big[ \{ B_{k}^{+}, B_{l}^{-} \}, F_{m}^{-} \big] = 0  & & \big[ [ F_{k}^{+}, F_{l}^{-} ], B_{m}^{-} \big] = 0     \\
    &     &              \\
\big[ \{ B_{k}^{-}, B_{l}^{-} \}, F_{m}^{-} \big] = 0  & &   \big[ [ F_{k}^{-}, F_{l}^{-} ], B_{m}^{-} \big] = 0    \\
   &       &          \\
\big[ \{ B_{k}^{+}, B_{l}^{+} \}, F_{m}^{-} \big] = 0  &  &  \big[ [ F_{k}^{+}, F_{l}^{+} ], B_{m}^{-} \big] = 0   \\
   &   &                   \\
   &     &              \\
\big[ \{ F_{m}^{-}, B_{k}^{+} \}, B_{l}^{-} \big] = -2 \delta_{kl} F_{m}^{-}  &  &  \{ \{ B_{m}^{-}, F_{k}^{+} \}, F_{l}^{-} \} = 2 \delta_{kl} B_{m}^{-}   \\
   &   &                      \\
\big[ \{ B_{l}^{-}, F_{m}^{-} \}, B_{k}^{+} \big] =  2 \delta_{kl} F_{m}^{-}  & & \{ \{ F_{l}^{-}, B_{m}^{-} \}, F_{k}^{+} \} =  2 \delta_{kl} B_{m}^{-} \\
   &                     \\
\big[ \{ B_{k}^{-}, B_{l}^{+} \}, F_{m}^{+} \big] = 0  & &  \big[ [ F_{k}^{-}, F_{l}^{+} ], B_{m}^{+} \big] = 0    \\
   &     &                 \\
\big[ \{ F_{m}^{+}, B_{k}^{-} \}, B_{l}^{+} \big] = 2 \delta_{kl} F_{m}^{+}  & &  \{ \{ B_{m}^{+}, F_{k}^{-} \}, F_{l}^{+} \} = 2 \delta_{kl} B_{m}^{+}  \\
    &   &                       \\
\big[ \{ B_{l}^{+}, F_{m}^{+} \}, B_{k}^{-} \big] = -2 \delta_{kl} F_{m}^{+}  & &   \{ \{ F_{l}^{+}, B_{m}^{+} \}, F_{k}^{-} \} = 2 \delta_{kl} B_{m}^{+}  \\
   & &                       \\
   &     &              \\
\big[ \{ F_{m}^{-}, B_{k}^{-} \}, B_{l}^{-} \big] = 0   &  &  \{ \{ B_{m}^{-}, F_{k}^{-} \}, F_{l}^{-} \} = 0     \\
    &   &                           \\
\big[ \{ B_{k}^{+}, B_{l}^{+} \}, F_{m}^{+} \big] = 0   &  &  \big[ [ F_{k}^{+}, F_{l}^{+} ], B_{m}^{+} \big] = 0   \\
   &          &                     \\
\big[ \{ F_{m}^{+}, B_{k}^{+} \}, B_{l}^{+} \big] = 0   & &  \{ \{ B_{m}^{+}, F_{k}^{+} \}, F_{l}^{+} \} = 0     \\
   &         &                   \\
   &     &              \\
\big[ \{ F_{m}^{-}, B_{k}^{+} \}, B_{l}^{+} \big] = 0   &  &  \{ \{ B_{m}^{-}, F_{k}^{+} \}, F_{l}^{+} \} = 0  \\
    &   &                          \\
\big[ \{ B_{k}^{-}, B_{l}^{-} \}, F_{m}^{+} \big] = 0    & &    \big[ [ F_{k}^{-}, F_{l}^{-} ], B_{m}^{+} \big] = 0   \\
    &     &                            \\
\big[ \{ F_{m}^{+}, B_{k}^{-} \}, B_{l}^{-} \big] = 0   &  &  \{ \{ B_{m}^{+}, F_{k}^{-} \}, F_{l}^{-} \} = 0
\end{array}
\end{equation}}
for all values $i, j, k = 1, 2, ..., $. One can easily observe that the relations \eqref{parab-paraf} involve only the parabosonic degrees of freedom and the
parafermionic degrees of freedom seperately while the relations \eqref{parabparaf} mix the parabosonic with the
parafermionic degrees of freedom according to the recipe first proposed by Greenberg and Messiah in \cite{GreeMe}. In all the above and in what follows,
we use the notation $[x, y]$ (i.e.: the ``commutator'') to imply the expression $xy-yx$ and the notation $\{x, y \}$ (i.e.: the ``anticommutator'')
to imply the expression $xy+yx$, for $x$ and $y$ any elements of the algebra $P_{BF}$.

Recently, in a series of very interesting papers \cite{Ya1, Ya2}, the authors make some very important observations and remarks about the graded
structure and  properties of the relative parabose set $P_{BF}$. Being theoretical physicists, they seem to be mainly interested in analyzing the various
symmetries and supersymmetries of $P_{BF}$ rather than focusing on a precise formulation of their important mathematical results. Before proceeding
to a rigorous statement of their results employing mainstream mathematical terminology, we find it useful for convenience of the reader to briefly recall
the idea of $\theta$-colored $G$-graded Lie algebra (for brevity: $(G, \theta)$-Lie algebra) \cite{Kakw, Scheu},
for the special case for which $G$ is the finite abelian group $\mathbb{Z}_{2} \times \mathbb{Z}_{2}$.

A $\theta$-colored ($\mathbb{Z}_{2} \times \mathbb{Z}_{2}$)-graded Lie algebra (or: ($\mathbb{Z}_{2} \times \mathbb{Z}_{2}, \theta$)-Lie algebra)
consists of the following data:
\begin{enumerate}
\item a ($\mathbb{Z}_{2} \times \mathbb{Z}_{2}$)-graded vector space $L$ (equivalently: a representation of the group Hopf algebra
$\mathbb{C}(\mathbb{Z}_{2} \times \mathbb{Z}_{2})$) thus:
\begin{equation} \label{gradedvsLie}
L = L_{00} \oplus L_{01} \oplus L_{10} \oplus L_{11}
\end{equation}
\item a non-associative multiplication $\langle .., .. \rangle : L \times L \rightarrow L$, on $L$ respecting the gradation, and
\item a function $\theta:\big( \mathbb{Z}_{2} \times \mathbb{Z}_{2} \big) \times \big( \mathbb{Z}_{2} \times \mathbb{Z}_{2} \big)
\rightarrow \mathbb{C}^{*}$ which is usually called a \emph{color function}  or a \emph{commutation factor} on $\mathbb{Z}_{2} \times \mathbb{Z}_{2}$
\end{enumerate}
The above data must be subject to the following set of axioms:
\begin{itemize}
\item $\langle L_{a}, L_{b} \rangle \subseteq L_{a+b}$, $ \ $ $a,b \in \mathbb{Z}_{2} \times \mathbb{Z}_{2}$

\item $\theta$-\textbf{braided} (($\mathbb{Z}_{2} \times \mathbb{Z}_{2}$)-graded) antisymmetry:
$\langle x, y \rangle = - \theta(a,b) \langle y, x \rangle$

\item $\theta$-\textbf{braided} (($\mathbb{Z}_{2} \times \mathbb{Z}_{2}$)-graded) Jacobi identity:
\\
$\theta(c,a) \langle x, \langle y, z \rangle \rangle + \theta(b,c)
\langle z, \langle x, y \rangle \rangle +
  \theta(a,b) \langle y, \langle z, x \rangle \rangle = 0$

\item $\theta:\big( \mathbb{Z}_{2} \times \mathbb{Z}_{2} \big)
\times \big( \mathbb{Z}_{2} \times \mathbb{Z}_{2} \big)
\rightarrow \mathbb{C}^{*} \rightsquigarrow$ \textbf{color
function}
              \begin{itemize}
              \item[-] $\theta(a+b, c) = \theta(a,c) \theta(b,c)$
              \item[-] $\theta(a, b+c) = \theta(a,b) \theta(a,c)$
              \item[-] $\theta(a,b) \theta(b,a) = 1$
              \end{itemize}
\end{itemize}
$\forall$ homogeneous $x \in L_{a}$, $y \in L_{b}$ and
$\forall$ $a = (a_{1}, a_{2}), b = (b_{1}, b_{2}) \in \mathbb{Z}_{2} \times \mathbb{Z}_{2}$.

It is easy for one to observe that the axioms imposed on the function $\theta:\big( \mathbb{Z}_{2} \times \mathbb{Z}_{2} \big)
\times \big( \mathbb{Z}_{2} \times \mathbb{Z}_{2} \big) \rightarrow \mathbb{C}^{*}$ imply that it is a skew-symmetric bicharacter
on $\mathbb{Z}_{2} \times \mathbb{Z}_{2}$ which has been shown to be equivalent \cite{Mon, Scheu1} to a triangular universal $R$-matrix on the
group Hopf algebra $\mathbb{C}(\mathbb{Z}_{2} \times \mathbb{Z}_{2})$. We thus conclude that by definition, the notion of
$\theta$-colored ($\mathbb{Z}_{2} \times \mathbb{Z}_{2}$)-graded Lie algebra implicitly presupposes the existence of a triangular
structure for the group Hopf algebra $\mathbb{C}(\mathbb{Z}_{2} \times \mathbb{Z}_{2})$ which finally entails a symmetric braiding
in the monoidal category ${}_{\mathbb{C}(\mathbb{Z}_{2} \times \mathbb{Z}_{2})}\mathcal{M}$ of its representations. This last remark,
fully justifies the use of the term braided, in both the antisymmetry property and the generalized Jacobi identity included in the
defining axioms of the $\theta$-colored ($\mathbb{Z}_{2} \times \mathbb{Z}_{2}$)-graded Lie algebra.

We furthermore recall (see \cite{Kakw, Scheu} for a more detailed description), that the universal enveloping algebra UEA of $L$ is
denoted by $\mathbb{U}(L)$ and it is a ($\mathbb{Z}_{2} \times \mathbb{Z}_{2}$)-graded associative algebra. It will be generated (as an
associative algebra) by any linearly independent set of homogeneous elements of $L$ or in other words any homogeneous basis of
$L$ will provide a set of homogeneous generators of $\mathbb{U}(L)$. Finally, in $\mathbb{U}(L)$ we will have the relations
\begin{equation} \label{braidZ22Liecom}
\langle x, y \rangle = xy - \theta(deg(x),deg(y))yx
\end{equation}
for any $x,y$ homogeneous elements of $L$ (or equivalently: any homogeneous generators of $\mathbb{U}(L)$)
and $deg(x), deg(y) \in  \mathbb{Z}_{2} \times \mathbb{Z}_{2}$.

We can now proceed to summarizing the mathematical results of \cite{Ya1, Ya2} in the following
\begin{theorem} \thlabel{gradstrRelParabSet}
The relative parabose set $P_{BF}$ is the universal enveloping algebra UEA of a
$\theta$-colored ($\mathbb{Z}_{2} \times \mathbb{Z}_{2}$)-graded Lie algebra $L_{\mathbb{Z}_{2} \times \mathbb{Z}_{2}}$.
This implies that $P_{BF}$ is a ($\mathbb{Z}_{2} \times \mathbb{Z}_{2}$)-graded associative algebra
\begin{equation}
P_{BF} \cong \mathbb{U}(L_{\mathbb{Z}_{2} \times \mathbb{Z}_{2}})
\end{equation}
Its generators are homogeneous elements in the above gradation, with the paraboson generators $B_{k}^{+}$, $B_{l}^{-}$,
$k,l = 1, 2, ...$ spanning the $L_{10}$ subspace of $L_{\mathbb{Z}_{2} \times \mathbb{Z}_{2}}$, and the parafermion generators
$F_{\alpha}^{+}$, $F_{\beta}^{-}$, $\alpha, \beta = 1, 2, ...$ spanning the $L_{11}$ subspace of $L_{\mathbb{Z}_{2} \times \mathbb{Z}_{2}}$, thus
their grades are given as follows
\begin{equation} \label{gradPBF1}
\begin{array}{ccc}
deg(B_{k}^{\pm}) = (1,0) & & deg(F_{\alpha}^{\pm}) = (1,1)
\end{array}
\end{equation}
At the same time the polynomials $\{ B_{k}^{\epsilon}, B_{l}^{\eta} \}$ and $[ F_{\alpha}^{\epsilon}, F_{\beta}^{\eta} ]$ $\forall$
$k, l, \alpha, \beta = 1,2,... \ $ and $\forall$ $\epsilon, \eta = \pm$ span the subspace $L_{00}$ of $L_{\mathbb{Z}_{2} \times \mathbb{Z}_{2}}$,
and the polynomials $\{ F_{\alpha}^{\epsilon}, B_{k}^{\eta} \}$ $\forall$ $k, \alpha = 1,2,... \ $ and $\forall$ $\epsilon, \eta =
\pm$ span the subspace $L_{01}$ of $L_{\mathbb{Z}_{2} \times \mathbb{Z}_{2}}$. Consequently their grades are given as follows
\begin{equation}  \label{gradPBF2}
\begin{array}{c}
deg(\{ B_{k}^{+}, B_{l}^{+} \}) = deg(\{ B_{k}^{+}, B_{l}^{-} \}) = deg(\{ B_{k}^{-}, B_{l}^{-} \}) = (0,0) \\
   \\
deg([ F_{\alpha}^{+}, F_{\beta}^{+} ]) = deg([ F_{\alpha}^{+}, F_{\beta}^{-} ]) = deg([ F_{\alpha}^{-}, F_{\beta}^{-} ]) = (0,0) \\
  \\
deg(\{ B_{k}^{+}, F_{\alpha}^{+} \}) = deg(\{ B_{k}^{+}, F_{\alpha}^{-} \}) = deg(\{ F_{\alpha}^{+},
B_{k}^{-} \}) = deg(\{ B_{k}^{-}, F_{\alpha}^{-} \}) = (0,1)    \\
\end{array}
\end{equation}
Finally the color function used in the above construction is given
\begin{equation} \label{colfunctRelParSe}
\begin{array}{c}
\theta:\big( \mathbb{Z}_{2} \times \mathbb{Z}_{2} \big)
\times \big( \mathbb{Z}_{2} \times \mathbb{Z}_{2} \big)
\rightarrow \mathbb{C}^{*} \\
  \\
\theta(a,b) = (-1)^{(a_{1}b_{1} + a_{2}b_{2})}
\end{array}
\end{equation}
$\forall$ $a = (a_{1}, a_{2}), b = (b_{1}, b_{2}) \in \mathbb{Z}_{2} \times \mathbb{Z}_{2}$, and the operations in the exponent are considered in the
$\mathbb{Z}_{2}$ ring.
\end{theorem}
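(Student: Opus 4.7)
The plan is to exhibit a $(\mathbb{Z}_{2} \times \mathbb{Z}_{2}, \theta)$-Lie algebra $L_{\mathbb{Z}_{2} \times \mathbb{Z}_{2}}$ sitting inside $P_{BF}$ whose universal enveloping algebra recovers $P_{BF}$. First I would verify that the function $\theta(a,b) = (-1)^{a_{1}b_{1}+a_{2}b_{2}}$ from \eqref{colfunctRelParSe} is a color function in the required sense: $\mathbb{Z}_{2}$-bilinearity of the exponent yields the two bicharacter axioms $\theta(a+b,c) = \theta(a,c)\theta(b,c)$ and $\theta(a,b+c) = \theta(a,b)\theta(a,c)$, while symmetry of the exponent in $a$ and $b$ gives $\theta(a,b)\theta(b,a) = \theta(a,b)^{2} = 1$, establishing the required skew-symmetry.

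Next I would define $L_{\mathbb{Z}_{2} \times \mathbb{Z}_{2}} = L_{00}\oplus L_{01}\oplus L_{10}\oplus L_{11}$ as a graded subspace of $P_{BF}$ by letting $L_{10}$ be the linear span of the $B_{k}^{\pm}$, $L_{11}$ the span of the $F_{\alpha}^{\pm}$, $L_{00}$ the span of all quadratics $\{B^{\epsilon}, B^{\eta}\}$ and $[F^{\epsilon}, F^{\eta}]$, and $L_{01}$ the span of the mixed quadratics $\{F^{\epsilon}, B^{\eta}\}$. I would equip it with the $\theta$-graded commutator inherited from $P_{BF}$, namely $\langle x,y\rangle = xy - \theta(\deg x,\deg y)yx$, as in \eqref{braidZ22Liecom}. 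The key step is then to check the closure $\langle L_{a}, L_{b}\rangle \subseteq L_{a+b}$. For the generators this is immediate from the values of $\theta$: $\theta((1,0),(1,0)) = -1$ forces $\langle B,B\rangle = \{B,B\}$ into $L_{00}$; $\theta((1,1),(1,1)) = +1$ forces $\langle F,F\rangle = [F,F]$ into $L_{00}$; and $\theta((1,0),(1,1)) = -1$ forces $\langle B,F\rangle = \{B,F\}$ into $L_{01}$. Closure of brackets between a quadratic and a linear element is then exactly the content of the trilinear relations \eqref{parab-paraf} and \eqref{parabparaf}: after rewriting the outer commutator or anticommutator as a $\theta$-bracket according to the grades, each relation reads $\langle L_{a}, L_{b}\rangle \subseteq L_{a+b}$, and the many relations with vanishing right-hand sides in \eqref{parabparaf} simply assert that the corresponding $\theta$-brackets are zero.

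The remaining axioms are automatic. The $\theta$-antisymmetry $\langle x,y\rangle = -\theta(\deg x,\deg y)\langle y,x\rangle$ is an algebraic identity using only $\theta(a,b)\theta(b,a)=1$, and the $\theta$-Jacobi identity holds for the graded commutator on any $(\mathbb{Z}_{2} \times \mathbb{Z}_{2})$-graded associative algebra, so it follows from the associativity of $P_{BF}$ by the standard color-algebra computation. Having built $L_{\mathbb{Z}_{2}\times\mathbb{Z}_{2}}$, the inclusion $L_{\mathbb{Z}_{2}\times\mathbb{Z}_{2}} \hookrightarrow P_{BF}$ extends by the universal property of $\mathbb{U}(L_{\mathbb{Z}_{2}\times\mathbb{Z}_{2}})$ to a surjective graded homomorphism $\mathbb{U}(L_{\mathbb{Z}_{2}\times\mathbb{Z}_{2}}) \twoheadrightarrow P_{BF}$, surjectivity being clear since the $B$'s and $F$'s already generate $P_{BF}$. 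For the converse, the only relations satisfied by the generators of $\mathbb{U}(L_{\mathbb{Z}_{2}\times\mathbb{Z}_{2}})$ beyond associativity are those of the form $xy - \theta(\deg x,\deg y)yx = \langle x,y\rangle$; applied to generators drawn from $L_{10}$, $L_{11}$ together with their quadratic images in $L_{00}$ and $L_{01}$, these reproduce \eqref{parab-paraf} and \eqref{parabparaf} verbatim, giving the inverse homomorphism and the claimed isomorphism.

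I expect the delicate step to be the PBW-style injectivity of the map $\mathbb{U}(L_{\mathbb{Z}_{2}\times\mathbb{Z}_{2}}) \to P_{BF}$, that is, ruling out hidden relations in $P_{BF}$ beyond what the $\theta$-Lie structure already enforces. The cleanest route is to invoke the graded PBW theorem for color Lie algebras, which ensures that an ordered monomial basis of $\mathbb{U}(L_{\mathbb{Z}_{2}\times\mathbb{Z}_{2}})$ built from a homogeneous basis of $L_{\mathbb{Z}_{2}\times\mathbb{Z}_{2}}$ remains linearly independent in $P_{BF}$; alternatively one can produce an explicit Fock-type representation of $P_{BF}$ of the expected dimension. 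The rest of the work is routine but sign-sensitive bookkeeping: matching each trilinear relation of \eqref{parab-paraf} and \eqref{parabparaf} with the $\theta$-bracket it encodes, and in doing so verifying the coherence of the grades \eqref{gradPBF1}--\eqref{gradPBF2} with the color function \eqref{colfunctRelParSe}.
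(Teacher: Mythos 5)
The paper itself contains no proof of \thref{gradstrRelParabSet}: the theorem is stated as a rigorous reformulation of results of Yang and Jing \cite{Ya1, Ya2}, so your reconstruction can only be judged on its own terms. Its skeleton is the natural one: verify that \eqref{colfunctRelParSe} is a skew-symmetric bicharacter; observe that the defining relations \eqref{parab-paraf}, \eqref{parabparaf} are homogeneous for the assignment $\deg B_{k}^{\pm}=(1,0)$, $\deg F_{\alpha}^{\pm}=(1,1)$, so that $P_{BF}$ is indeed $(\mathbb{Z}_{2}\times\mathbb{Z}_{2})$-graded and the $\theta$-bracket \eqref{braidZ22Liecom} makes sense on it; take $L_{\mathbb{Z}_{2}\times\mathbb{Z}_{2}}$ to be the span of the generators together with the quadratics; and read the trilinear relations as closure statements. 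All of that is sound.

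Two steps need repair, however. First, you check closure only for generator--generator and quadratic--generator brackets; the axiom $\langle L_{a},L_{b}\rangle\subseteq L_{a+b}$ also requires the brackets $\langle L_{00},L_{00}\rangle$, $\langle L_{00},L_{01}\rangle$, $\langle L_{01},L_{01}\rangle$ of two quadratic elements to reduce again to quadratics. This does hold -- either by the color Jacobi identity applied to $\langle u,\langle z,w\rangle\rangle$ with $z,w$ generators, since $\langle u,z\rangle$ and $\langle u,w\rangle$ land back in $L_{10}\oplus L_{11}$ by the trilinear relations, or by the quadruple-bracket identities \eqref{genassocidentit} -- but it must be stated, since it is part of the Lie structure being asserted. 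Second, the ``delicate step'' is misdiagnosed: a PBW theorem for color Lie algebras produces a monomial basis of $\mathbb{U}(L_{\mathbb{Z}_{2}\times\mathbb{Z}_{2}})$ but says nothing about whether those monomials stay linearly independent in $P_{BF}$, so it cannot by itself deliver injectivity of the canonical surjection; and an explicit Fock-type representation is not available (the paper itself notes that Fock-like representations of $P_{BF}$ have never been constructed). The isomorphism is obtained more cheaply by the two-sided argument you only gesture at in your ``converse'' sentence: since $P_{BF}$ is presented by the generators $B_{k}^{\pm}, F_{\alpha}^{\pm}$ subject to \eqref{parab-paraf}, \eqref{parabparaf}, and since these trilinear relations hold among the corresponding elements of $\mathbb{U}(L_{\mathbb{Z}_{2}\times\mathbb{Z}_{2}})$ (apply \eqref{braidZ22Liecom} twice, noting for instance that $\theta((0,0),(1,0))=1$ turns the outer commutator into the $\theta$-bracket while $\theta((0,1),(1,1))=-1$ turns the outer anticommutator into it), the universal property of $P_{BF}$ yields a homomorphism $P_{BF}\rightarrow\mathbb{U}(L_{\mathbb{Z}_{2}\times\mathbb{Z}_{2}})$ that is a two-sided inverse on generators to the canonical map $\mathbb{U}(L_{\mathbb{Z}_{2}\times\mathbb{Z}_{2}})\rightarrow P_{BF}$. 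No PBW input is needed for the statement as formulated, which claims only that the quadratics span $L_{00}$ and $L_{01}$, not that they are linearly independent.
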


The UEA of a ($\mathbb{Z}_{2} \times \mathbb{Z}_{2}, \theta$)-Lie algebra $L$ is not a Hopf algebra, at least not in the ordinary sense. Instead it has
a structure which is strongly reminiscent of Hopf algebras: First we consider the ($\mathbb{Z}_{2} \times \mathbb{Z}_{2}$)-graded v.s.
$\mathbb{U}(L) \otimes \mathbb{U}(L)$, equipped with the multiplication
\begin{equation} \label{braidmultip}
(x \otimes y)(z \otimes w) = \theta(a,b) xz \otimes yw
\end{equation}
$\forall$ $x,w \in \mathbb{U}(L)$ and for $y,z$ homogeneous in $\mathbb{U}(L)$ with $deg(y) = a$, $deg(z) = b$, which becomes an associative
($\mathbb{Z}_{2} \times \mathbb{Z}_{2}$)-graded algebra and will be denoted $\mathbb{U}(L) \underline{\otimes} \mathbb{U}(L)$ and called
$\theta$-braided, ($\mathbb{Z}_{2} \times \mathbb{Z}_{2}$)-graded tensor product algebra from now on. Then $U(L)$ is equipped with a ``comultiplication''
\begin{equation} \label{brcopr}
\underline{\Delta} : \mathbb{U}(L) \rightarrow \mathbb{U}(L) \underline{\otimes} \mathbb{U}(L)
\end{equation}
which is a homomorphism of ($\mathbb{Z}_{2} \times \mathbb{Z}_{2}$)-graded algebras from $U(L)$ to the braided tensor product algebra
$\mathbb{U}(L) \underline{\otimes} \mathbb{U}(L)$ in the sense that
\begin{equation} \label{brcoprhom}
\underline{\Delta}(xy) = \sum \theta(deg(x_{2}), deg(y_{1})) x_{1}y_{1} \otimes x_{2}y_{2} = \underline{\Delta}(x) \underline{\Delta}(y)
\end{equation}
for any $x,y \in \mathbb{U}(L)$, where $\underline{\Delta}(x) = \sum x_{1} \otimes x_{2}$,
$\underline{\Delta}(y) = \sum y_{1} \otimes y_{2}$ and $x_{2}$, $y_{1}$ homogeneous with
$deg(x_{2}), deg(y_{1}) \in \mathbb{Z}_{2} \times \mathbb{Z}_{2}$.
The product $\underline{\Delta}(x) \underline{\Delta}(y)$ in the rhs of \eqref{brcoprhom} is to be understood in
$\mathbb{U}(L) \underline{\otimes} \mathbb{U}(L)$; thus it is given by \eqref{braidmultip}. $\underline{\Delta}$ is uniquely defined by its
values on the generators of $U(L)$ (i.e.: the basis elements of $L$)
\begin{equation} \label{brcoprval}
\underline{\Delta}(x) = 1 \otimes x + x \otimes 1
\end{equation}
Similarly, $\mathbb{U}(L)$ is equipped with an ``antipode'' $\underline{S} : U(L) \rightarrow U(L)$ which, unlike the case of ordinary Hopf algebras,
is not an algebra antihomomorphism but a ``twisted'' \cite{Mon} or braided antihomomorphism of ($\mathbb{Z}_{2} \times \mathbb{Z}_{2}$)-graded algebras
in the following sense
\begin{equation} \label{brantantihom}
\underline{S}(xy) = \theta(deg(x), deg(y)) \underline{S}(y)\underline{S}(x)
\end{equation}
for any homogeneous $x, y \in U(L)$. $S$ is uniquely defined by its values on the generators $U(L)$ (i.e.: the basis elements of $L$)
\begin{equation} \label{braantival}
\underline{S}(x) = -x
\end{equation}
If the above are supplemented with $\underline{\varepsilon}(x) = 0 $ $\ \forall$ $x \in \mathbb{U}(L)$ $\ x \neq 1$, $\underline{\varepsilon}(1) = 1$ then one can
straightforwardly check that all usual axioms of the Hopf structure (coassociativity, counity, antipode compatibility, etc) are satisfied.
The resulting algebraic structure is traditionally called in the
literature as a ($\mathbb{Z}_{2} \times \mathbb{Z}_{2}$)-graded Hopf algebra or a $(\mathbb{Z}_{2} \times \mathbb{Z}_{2}, \theta)$-graded Hopf algebra.
According to the modern terminology \cite{Maj,Mon}, developed in the '$90$'s and originating from Quantum Group theory, such a structure will  be called a $\theta$-braided group
in the sense of the braiding induced in the ${}_{\mathbb{C}(\mathbb{Z}_{2} \times \mathbb{Z}_{2})}\mathcal{M}$ category by the commutation factor $\theta$
(or by its equivalent description which will be the triangular structure of the $\mathbb{C}(\mathbb{Z}_{2} \times \mathbb{Z}_{2})$ group Hopf algebra).
It is also customary to speak of $U(L)$ as a Hopf algebra in the symmetric monoidal category ${}_{\mathbb{C}(\mathbb{Z}_{2} \times \mathbb{Z}_{2})}\mathcal{M}$
of representations of the group Hopf algebra $\mathbb{C}(\mathbb{Z}_{2} \times \mathbb{Z}_{2})$.

Based on the previous discussion and on the results of \thref{gradstrRelParabSet} we will conclude this section by stating and proving two Lemmas which will be of use in the
next section:
\begin{lemma} \lelabel{subspaces}
The linear subspace of the Relative Parabose set $P_{BF}$ (or of the ($\mathbb{Z}_{2} \times \mathbb{Z}_{2}, \theta$)-Lie algebra $L_{\mathbb{Z}_{2} \times \mathbb{Z}_{2}}$)
spanned by the elements of the form $\{ B_{k}^{\epsilon}, B_{l}^{\eta} \}$, $[ F_{\alpha}^{\epsilon}, F_{\beta}^{\eta} ]$ and $\{ F_{\alpha}^{\epsilon}, B_{k}^{\eta} \}$
$\forall$ $k, l, \alpha, \beta = 1,2,... \ $ and $\forall$ $\epsilon, \eta = \pm$ is a $\mathbb{Z}_{2}$-graded Lie algebra (or equivalently a Lie superalgebra). The UEA of this
Lie superalgebra is a subalgebra of $P_{BF}$.
\end{lemma}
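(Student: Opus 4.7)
The plan is to leverage the $(\mathbb{Z}_2 \times \mathbb{Z}_2)$-grading of $P_{BF}$ already established in \thref{gradstrRelParabSet} and argue that the candidate subspace, which I denote $L' := L_{00} \oplus L_{01}$, inherits an ordinary Lie superalgebra structure by ``forgetting'' the first $\mathbb{Z}_2$-coordinate of the grading. The key observation is that the degrees appearing in $L'$ are all of the form $(0,\cdot)$, so the color function \equref{colfunctRelParSe} restricts to the ordinary superalgebra sign rule.

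First I would verify that $L'$ is closed under the $\theta$-braided bracket: since $(0,a)+(0,b)=(0,a+b)$ is again of the form $(0,\cdot)$, the bracket on $L_{\mathbb{Z}_2 \times \mathbb{Z}_2}$ restricts to $L'$ and respects a $\mathbb{Z}_2$-grading in which $L_{00}$ is the even part and $L_{01}$ the odd part. Then I would compute $\theta((0,a_2),(0,b_2)) = (-1)^{a_2 b_2}$, which is precisely the sign $(-1)^{|x||y|}$ that governs the Lie superalgebra axioms; substituting this into the $\theta$-braided antisymmetry and Jacobi identity, both of which hold on $L_{\mathbb{Z}_2 \times \mathbb{Z}_2}$ by \thref{gradstrRelParabSet}, reproduces the ordinary $\mathbb{Z}_2$-graded antisymmetry and super-Jacobi identity on $L'$. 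This establishes the first half of the lemma.

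For the second half I would apply the universal property recalled in \seref{realizationdef}: the canonical inclusion $j : L' \hookrightarrow P_{BF}$ is an even linear map and, by the preceding paragraph, already satisfies the sufficient condition \equref{suffcondreal} with $\mathcal{A}_{super}=P_{BF}$. It therefore extends uniquely to a homomorphism $\iota : \mathbb{U}(L') \to P_{BF}$ of $\mathbb{Z}_2$-graded associative algebras, and the image of $\iota$ is automatically the associative subalgebra of $P_{BF}$ generated by $L'$. What remains is the injectivity of $\iota$.

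Injectivity is the step I expect to require the most care, and I would settle it by a Poincar\'e--Birkhoff--Witt comparison. Fix a homogeneous basis of $L_{\mathbb{Z}_2 \times \mathbb{Z}_2}$ in which the vectors spanning $L'$ precede those spanning $L_{10} \oplus L_{11}$, together with a total order compatible with this splitting. The PBW theorem for $\theta$-colored $(\mathbb{Z}_2 \times \mathbb{Z}_2)$-graded Lie algebras (see \cite{Scheu}) then exhibits an ordered-monomial basis of $P_{BF}$, and the submonomials whose letters all lie in $L'$ form a linearly independent subset. The Lie-superalgebra PBW theorem applied to $L'$ supplies a basis of $\mathbb{U}(L')$ consisting of exactly the same ordered monomials, so $\iota$ sends a basis to a basis and is therefore injective. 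The only subtle point is arranging for the two PBW bases to match literally; once the ordering and the multiplicity conventions (elements of $L_{00}$ appearing with arbitrary multiplicity, those of $L_{01}$ at most once, since $\theta((0,1),(0,1))=-1$) are aligned, the identification is mechanical.
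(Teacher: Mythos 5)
Your argument for the first claim is exactly the paper's: you observe that $\{(0,0),(0,1)\}$ is a subgroup of $\mathbb{Z}_{2}\times\mathbb{Z}_{2}$ on which the color function \equref{colfunctRelParSe} restricts to the ordinary supersymmetric sign $(-1)^{a_{2}b_{2}}$, so the $\theta$-braided antisymmetry and Jacobi identity of \thref{gradstrRelParabSet} become the usual Lie superalgebra axioms on $L_{00}\oplus L_{01}$. For the second claim the paper simply declares the statement ``obvious by definition,'' whereas you add a correct Poincar\'e--Birkhoff--Witt comparison (matching ordered-monomial bases of $\mathbb{U}(L_{00}\oplus L_{01})$ and of $P_{BF}\cong\mathbb{U}(L_{\mathbb{Z}_{2}\times\mathbb{Z}_{2}})$, with $L_{01}$-letters of multiplicity at most one) to establish injectivity of the induced map; this is the same route with a gap in the paper's exposition filled in rather than a different proof.
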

\begin{proof}
From \thref{gradstrRelParabSet} we have that the elements $\{ B_{k}^{\epsilon}, B_{l}^{\eta} \}$, $[ F_{\alpha}^{\epsilon}, F_{\beta}^{\eta} ]$ and
$\{ F_{\alpha}^{\epsilon}, B_{k}^{\eta} \}$ $\forall$ $k, l, \alpha, \beta = 1,2,... \ $ and $\forall$ $\epsilon, \eta = \pm$ span the $L_{00} \oplus L_{01}$ subspace
of $L_{\mathbb{Z}_{2} \times \mathbb{Z}_{2}} = L_{00} \oplus L_{01} \oplus L_{10} \oplus L_{11}$.
Now it suffices to notice that the subset $\{(0,0), (0,1)\}$ of $\mathbb{Z}_{2} \times \mathbb{Z}_{2}$ is a subgroup isomorphic to the $\mathbb{Z}_{2}$ group and that the
restriction of the commutation factor \eqref{colfunctRelParSe} on $\{(0,0), (0,1)\}$ coincides (as a function) with the usual commutation factor of Lie superalgebras
(see \eqref{Liesupercom}). The fact that the corresponding UEA is a subalgebra of $P_{BF}$ is obvious by definition. Thus the proof is completed.

Alternatively, we could say that employing the commutation factor $\theta$ given in \eqref{colfunctRelParSe} and the gradation described in \thref{gradstrRelParabSet} we
get the following relations inside the relative parabose set
\begin{equation} \label{Z2subZ22}
\begin{array}{c}
L_{00} \ni \langle L_{00}, L_{00} \rangle =  [ L_{00}, L_{00} ]  \rightsquigarrow  \textrm{commutator} \\
    \\
L_{01} \ni \langle L_{00}, L_{01} \rangle =  [ L_{00}, L_{01} ]  \rightsquigarrow  \textrm{commutator}       \\
    \\
L_{00} \ni \langle L_{01}, L_{01} \rangle =  \{ L_{01}, L_{01} \}   \rightsquigarrow \textrm{anticommutator}      \\
\end{array}
\end{equation}
where $\langle .., .. \rangle : L \times L \rightarrow L$ is the non-associative multiplication of the ($\mathbb{Z}_{2} \times \mathbb{Z}_{2}, \theta$)-Lie algebra
$L_{\mathbb{Z}_{2} \times \mathbb{Z}_{2}}$
\end{proof}

We furthermore have the following
\begin{lemma} \lelabel{braidgrstrPbf}
The relative parabose set $P_{BF}$ has the structure of a $\theta$-braided group where the commutation factor $\theta:\big( \mathbb{Z}_{2} \times \mathbb{Z}_{2} \big)
\times \big( \mathbb{Z}_{2} \times \mathbb{Z}_{2} \big) \rightarrow \mathbb{C}^{*}$ is given by \eqref{colfunctRelParSe}. The relations can be given explicitly as
\small{\begin{equation} \label{brcombrantPbf}
\begin{array}{cc}
\underline{\Delta}(B_{i}^{\pm}) = 1 \otimes B_{i}^{\pm} + B_{i}^{\pm} \otimes 1  & \underline{S}(B_{i}^{\pm}) = - B_{i}^{\pm} \\
    \\
\underline{\Delta}(F_{j}^{\pm}) = 1 \otimes F_{j}^{\pm} + F_{j}^{\pm} \otimes 1 &  \underline{S}(F_{j}^{\pm}) = - F_{j}^{\pm} \\
    \\
\underline{\Delta}(\{ B_{k}^{\epsilon}, B_{l}^{\eta} \}) = 1 \otimes \{ B_{k}^{\epsilon}, B_{l}^{\eta} \} + \{ B_{k}^{\epsilon}, B_{l}^{\eta} \} \otimes 1 &
\underline{S}(\{ B_{k}^{\epsilon}, B_{l}^{\eta} \}) = - \{ B_{k}^{\epsilon}, B_{l}^{\eta} \}  \\
   \\
\underline{\Delta}([ F_{\alpha}^{\epsilon}, F_{\beta}^{\eta} ]) = 1 \otimes [ F_{\alpha}^{\epsilon}, F_{\beta}^{\eta} ] + [ F_{\alpha}^{\epsilon}, F_{\beta}^{\eta} ] \otimes 1  &
\underline{S}([ F_{\alpha}^{\epsilon}, F_{\beta}^{\eta} ]) = - [ F_{\alpha}^{\epsilon}, F_{\beta}^{\eta} ] \\
    \\
\underline{\Delta}(\{ F_{\alpha}^{\epsilon}, B_{k}^{\eta} \}) = 1 \otimes \{ F_{\alpha}^{\epsilon}, B_{k}^{\eta} \} + \{ F_{\alpha}^{\epsilon}, B_{k}^{\eta} \} \otimes 1   &
\underline{S}(\{ F_{\alpha}^{\epsilon}, B_{k}^{\eta} \}) = - \{ F_{\alpha}^{\epsilon}, B_{k}^{\eta} \}
\end{array}
\end{equation}}
$\forall$ $i, j, k, l, \alpha, \beta = 1,2,... \ $ and $\forall$ $\epsilon, \eta = \pm$. We also have $\underline{\varepsilon}(x) = 0 $ $\forall$ $x \in P_{BF}$.
\end{lemma}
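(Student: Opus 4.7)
The plan is to read off the $\theta$-braided group structure of $P_{BF}$ directly from \thref{gradstrRelParabSet} and the general theory of UEAs of $(\mathbb{Z}_{2}\times\mathbb{Z}_{2},\theta)$-Lie algebras recalled in the paragraphs preceding the statement. By \thref{gradstrRelParabSet} we have $P_{BF}\cong \mathbb{U}(L_{\mathbb{Z}_{2}\times\mathbb{Z}_{2}})$, and the recalled construction endows any such UEA with a canonical $\theta$-braided Hopf structure: on the underlying Lie-algebra elements (equivalently, on any homogeneous basis of $L_{\mathbb{Z}_{2}\times\mathbb{Z}_{2}}$) the coproduct, antipode and counit are given by the primitive-type formulas \eqref{brcoprval}, \eqref{braantival} and $\underline{\varepsilon}(x)=0$, and then extended to the whole algebra by braided multiplicativity \eqref{brcoprhom} and braided anti-multiplicativity \eqref{brantantihom}. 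The coassociativity, counity and antipode axioms have already been asserted in the preceding discussion, so there is nothing new to verify on the structural side.

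The main content of the lemma is therefore the observation that \emph{all} of the elements listed on the left-hand sides of \eqref{brcombrantPbf} are homogeneous elements of $L_{\mathbb{Z}_{2}\times\mathbb{Z}_{2}}\subset \mathbb{U}(L_{\mathbb{Z}_{2}\times\mathbb{Z}_{2}})=P_{BF}$, not merely elements of the UEA. For the linear generators $B_{i}^{\pm}$ and $F_{j}^{\pm}$ this is immediate from \equref{gradPBF1}. For the quadratic polynomials $\{B_{k}^{\epsilon},B_{l}^{\eta}\}$, $[F_{\alpha}^{\epsilon},F_{\beta}^{\eta}]$ and $\{F_{\alpha}^{\epsilon},B_{k}^{\eta}\}$, \equref{gradPBF2} asserts precisely that they span the subspaces $L_{00}$ and $L_{01}$ of $L_{\mathbb{Z}_{2}\times\mathbb{Z}_{2}}$. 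Consequently each of them is primitive in the braided sense, and \eqref{brcoprval}–\eqref{braantival} apply verbatim, producing the formulas in \eqref{brcombrantPbf}.

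For a reader who wants a direct check at the level of generators only, I would include the short verification that the quadratic formulas are consistent with braided multiplicativity \eqref{brcoprhom}: computing $\underline{\Delta}(x)\,\underline{\Delta}(y)\pm\underline{\Delta}(y)\,\underline{\Delta}(x)$ in $P_{BF}\,\underline{\otimes}\,P_{BF}$ via the braided product \eqref{braidmultip}, the cross-terms $y\otimes x$ and $x\otimes y$ collect with coefficients $1\pm\theta(\deg x,\deg y)$; plugging in $\theta((1,0),(1,0))=-1$ for two parabosons (with the anticommutator), $\theta((1,1),(1,1))=+1$ for two parafermions (with the commutator), and $\theta((1,0),(1,1))=-1$ for a boson and a fermion (with the anticommutator), the cross-terms cancel in every case, leaving exactly $1\otimes\langle x,y\rangle+\langle x,y\rangle\otimes 1$. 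The analogous check for $\underline{S}$ uses \eqref{brantantihom} with the same three values of $\theta$.

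The only point that could be viewed as an obstacle is bookkeeping: one has to be sure that the $\theta$-braided multiplication of the tensor-square is exactly the one prescribed by the commutation factor \eqref{colfunctRelParSe} on $\mathbb{Z}_{2}\times\mathbb{Z}_{2}$. Once \thref{gradstrRelParabSet} fixes the grades of the generators and of the relevant quadratic polynomials, this is automatic; in particular the subalgebra identified in \leref{subspaces} sits inside $P_{BF}$ as a sub-super-Hopf algebra under the formulas just proved, which is the form in which this lemma will be needed in \seref{newresults}.
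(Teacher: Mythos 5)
Your proposal is correct and follows essentially the same route as the paper: the lemma is read off as a direct consequence of \thref{gradstrRelParabSet} together with the recalled $\theta$-braided UEA structure (primitivity of elements of $L_{\mathbb{Z}_{2}\times\mathbb{Z}_{2}}$, including the quadratic elements spanning $L_{00}\oplus L_{01}$), and your explicit cross-term cancellation check with $\theta((1,0),(1,0))=\theta((1,0),(1,1))=-1$, $\theta((1,1),(1,1))=+1$ is exactly the kind of verification the paper itself demonstrates for $\underline{\Delta}(\{F_{\alpha}^{\epsilon},B_{k}^{\eta}\})$.
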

\begin{proof}
This is a straightforward consequence of \thref{gradstrRelParabSet} and the succeeding discussion. For clarity we demonstrate the reproduction of one of the relations:
\small{$$
\begin{array}{c}
\underline{\Delta}(\{ F_{\alpha}^{\epsilon}, B_{k}^{\eta} \}) = \underline{\Delta}(F_{\alpha}^{\epsilon}B_{k}^{\eta} + B_{k}^{\eta}F_{\alpha}^{\epsilon}) =
\underline{\Delta}(F_{\alpha}^{\epsilon})\underline{\Delta}(B_{k}^{\eta}) + \underline{\Delta}(B_{k}^{\eta})\underline{\Delta}(F_{\alpha}^{\epsilon}) = \\
  \\
(1 \otimes F_{\alpha}^{\epsilon} + F_{\alpha}^{\epsilon} \otimes 1)(1 \otimes B_{k}^{\eta} + B_{k}^{\eta} \otimes 1) +
(1 \otimes B_{k}^{\eta} + B_{k}^{\eta} \otimes 1)(1 \otimes F_{\alpha}^{\epsilon} + F_{\alpha}^{\epsilon} \otimes 1) =  \\
   \\
1 \otimes F_{\alpha}^{\epsilon}B_{k}^{\eta} + \theta(deg(F_{\alpha}^{\epsilon}), deg(B_{k}^{\eta}))B_{k}^{\eta} \otimes F_{\alpha}^{\epsilon} + F_{\alpha}^{\epsilon} \otimes B_{k}^{\eta} + F_{\alpha}^{\epsilon}B_{k}^{\eta} \otimes 1 + \\
      \\
+ 1 \otimes B_{k}^{\eta}F_{\alpha}^{\epsilon} + \theta(deg(B_{k}^{\eta}), deg(F_{\alpha}^{\epsilon}))F_{\alpha}^{\epsilon} \otimes B_{k}^{\eta} + B_{k}^{\eta} \otimes F_{\alpha}^{\epsilon} + B_{k}^{\eta}F_{\alpha}^{\epsilon} \otimes 1 =    \\
    \\
= 1 \otimes \{ F_{\alpha}^{\epsilon}, B_{k}^{\eta} \} + \{ F_{\alpha}^{\epsilon}, B_{k}^{\eta} \} \otimes 1
\end{array}
$$}
since $\theta(deg(F_{\alpha}^{\epsilon}), deg(B_{k}^{\eta})) = \theta(deg(B_{k}^{\eta}), deg(F_{\alpha}^{\epsilon})) = -1$.
\end{proof}

\section{A new family of Realizations for an arbitrary Lie superalgebra} \selabel{newresults}

We recall that the prefix ``super'' will always amount to $\mathbb{Z}_{2}$-graded.

Let $L=L_{0} \oplus L_{1}$ be any complex Lie superalgebra of either finite or infinite dimension and let $V = V_{0} \oplus V_{1}$ be a finite
dimensional, complex, super-vector space i.e. $dim_{\mathbb{C}}V_{0} = m$ and $dim_{\mathbb{C}}V_{1} = n$. If $V$ is the carrier space for a
super-representation (or: a $\mathbb{Z}_{2}$-graded representation) of $L$, this is equivalent \cite{Cornw, Scheu2} to the existence of an homomorphism
$P: \mathbb{U}(L) \rightarrow End(V)$ of associative superalgebras. For any homogeneous element $z \in L$ the image $P(z)$ will be a  $(m+n)\times (m+n)$
matrix of the form
\begin{equation} \label{GRADEDREPR}
 P(z)= \left(\begin{array}{c|c}
              A(z) & B(z) \\ \hline
              C(z) & D(z)
             \end{array}\right)
\end{equation}
such that $B(z)=C(z)=0$ for all even elements and $A(z)=D(z)=0$ for all odd elements. Consequently, if $z \in L_{0}$ (equivalently if $z=X$ is an
even generator of $U(L)$) then \eqref{GRADEDREPR} becomes
\begin{equation} \label{evenmatr}
P(X)=
 \left(\begin{array}{ccc}
A(X) & 0 \\
0 & D(X)
\end{array}\right)
\end{equation}
while if $z \in L_{1}$ (equivalently if $z=Y$ is an odd generator of $U(L)$) then \eqref{GRADEDREPR} becomes
\begin{equation} \label{oddmatr}
P(Y) =
 \left(\begin{array}{ccc}
0 & B(Y) \\
C(Y) & 0
\end{array}\right)
\end{equation}
The submatrices $A_{m\times m}$, $B_{m\times n}$, $C_{n\times m}$,
$D_{n\times n}$,  of $P_{(m+n)\times (m+n)}$ constitute the
partitioning ($\mathbb{Z}_{2}$-grading) of the representation, with the dividing
lines in \eqref{GRADEDREPR} inserted in order to emphasize this
partitioning.

If $X_{i}$, $i = 1,2,...$ constitute an homogeneous basis in the even part $L_{0}$ of the Lie superalgebra $L$ and $Y_{j}$, $j = 1,2,...$
constitute an homogeneous basis of the odd part $L_{1}$ (thus: $X_{i}, \ Y_{j}$ $ \ \forall \ i = 1,2,... \ ,\ j = 1,2,...$ consitute a set of
homogeneous generators of $\mathbb{U}(L)$) then we have the following
\begin{theorem} \thlabel{parapartrealLiesuperalg}
The linear map $J_{P_{BF}}: L \rightarrow P_{BF}$ defined by
\begin{equation} \label{JordgenPBFeqeven}
\begin{array}{c}
 J_{P_{BF}} : L \longrightarrow P_{BF}     \\
                \\
   X_{i} \mapsto J_{P_{BF}}(X_{i}) = \frac{1}{2} \sum_{k,l=1}^{m}A_{kl}(X_{i}) \{ B_{k}^{+}, B_{l}^{-} \} +
\frac{1}{2} \sum_{\alpha,\beta=1}^{n}D_{\alpha\beta}(X_{i}) [ F_{\alpha}^{+}, F_{\beta}^{-} ] \\
\end{array}
\end{equation}
for any even element ($Z=X_{i}$) of an homogeneous basis of $L$ and by
\begin{equation} \label{JordgenPBFeqodd}
\begin{array}{c}
 J_{P_{BF}} : L \longrightarrow P_{BF}     \\
                \\
   Y_{j} \mapsto J_{P_{BF}}(Y_{j}) =
  \frac{1}{2} \sum_{k=1}^{m}\sum_{\alpha=1}^{n} \Big( B_{k,\alpha}(Y_{j}) \{ B_{k}^{+},
   F_{\alpha}^{-} \}
+ C_{\alpha,k}(Y_{j}) \{ F_{\alpha}^{+}, B_{k}^{-} \} \Big) \\
\end{array}
\end{equation}
for any odd element ($Z=Y_{j}$) of an homogeneous basis of $L$, can be extended to an homomorphism of associative algebras
$J_{P_{BF}}: \mathbb{U}(L) \rightarrow P_{BF}$ between the universal enveloping algebra $\mathbb{U}(L)$ of the Lie superalgebra $L$
and the relative parabose set $P_{BF}$, in other words it is a realization of $L$ with paraparticles.

Furthermore, the above constructed homomorphism of associative algebras $J_{P_{BF}}: \mathbb{U}(L) \rightarrow P_{BF}$, is an homomorphism
of super-Hopf algebras (equivalently an homomorphism of $\mathbb{Z}_{2}$-graded Hopf algebras) between $\mathbb{U}(L)$ and a suitable
$\mathbb{Z}_{2}$-graded subalgebra of $P_{BF}$
\end{theorem}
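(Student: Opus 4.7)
The plan is to factor the realization $J_{P_{BF}}$ as a composition $\phi \circ P|_{L}$ through a paraparticle Jordan--Schwinger embedding of $gl(m|n)$ into $P_{BF}$. Write $\mathfrak{g} := L_{00} \oplus L_{01}$ for the Lie superalgebra isolated in \leref{subspaces}, whose universal enveloping algebra is a super subalgebra of $P_{BF}$. First I would define a linear map $\phi: gl(m|n) \to \mathfrak{g}$ on the standard matrix units: $E_{kl}$ in the upper-left even block is sent to $\frac{1}{2}\{B_k^+, B_l^-\}$, $E_{\alpha\beta}$ in the lower-right even block goes to $\frac{1}{2}[F_\alpha^+, F_\beta^-]$, and the two off-diagonal odd blocks go to $\frac{1}{2}\{B_k^+, F_\alpha^-\}$ and $\frac{1}{2}\{F_\alpha^+, B_k^-\}$ respectively. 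Granting that $\phi$ is a Lie superalgebra morphism, the hypothesis that $P$ is a homomorphism of associative superalgebras $\mathbb{U}(L) \to End(V)$ amounts, on $L$, to $P|_L: L \to gl(m|n)$ being a Lie superalgebra morphism. The composition $J_{P_{BF}}|_L := \phi \circ P|_L$ then satisfies the condition \eqref{suffcondreal}, and by the universal property of $\mathbb{U}(L)$ extends uniquely to an associative algebra homomorphism $J_{P_{BF}}: \mathbb{U}(L) \to \mathbb{U}(\mathfrak{g}) \subseteq P_{BF}$. Formulas \eqref{JordgenPBFeqeven} and \eqref{JordgenPBFeqodd} are precisely $\phi$ applied to the matrices $P(X_i)$ and $P(Y_j)$ expanded in the matrix-unit basis.

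The core of the argument is therefore verifying that $\phi$ is a Lie superalgebra morphism, which I would do by direct computation from the trilinear defining identities \eqref{parab-paraf} and \eqref{parabparaf}. The natural tactic is to prove first the ``infinitesimal action'' identities such as $[\frac{1}{2}\{B_k^+, B_l^-\}, B_r^{\pm}] = \pm \delta_{lr} B_k^\pm$, together with their parafermionic and mixed analogs, each of which follows from a single application of the relations; a second application to the outer bilinear factor then collapses every bracket of two bilinears back into a single bilinear, with coefficients reproducing precisely the $gl(m|n)$ structure constants. This is essentially the only calculational labor in the proof. The main obstacle is organizational rather than conceptual: there are four species of bilinears, so several bracket cases must be inspected, and in each the sign must be traced through the color function $\theta$ of \eqref{colfunctRelParSe}, which by \leref{subspaces} restricts on grades in $\{(0,0),(0,1)\}$ to the ordinary super-sign.

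For the second assertion I would rely on \leref{braidgrstrPbf}, which records that every paraparticle bilinear of the three types $\{B_k^\epsilon, B_l^\eta\}$, $[F_\alpha^\epsilon, F_\beta^\eta]$, $\{F_\alpha^\epsilon, B_k^\eta\}$ is primitive under $\underline{\Delta}$ and is sent to its own negative by $\underline{S}$. Each generator image $J_{P_{BF}}(X_i)$ and $J_{P_{BF}}(Y_j)$ is a linear combination of such bilinears and therefore inherits primitivity and sign-reversal; on the other side, $\mathbb{U}(L)$ carries the standard super-Hopf structure in which $\Delta(z) = 1 \otimes z + z \otimes 1$, $S(z) = -z$ and $\varepsilon(z) = 0$ for $z \in L$. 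Hence the identities $\underline{\Delta} \circ J_{P_{BF}} = (J_{P_{BF}} \otimes J_{P_{BF}}) \circ \Delta$, $\underline{S} \circ J_{P_{BF}} = J_{P_{BF}} \circ S$ and $\underline{\varepsilon} \circ J_{P_{BF}} = \varepsilon$ hold when applied to the generators of $\mathbb{U}(L)$.

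To propagate these identities from generators to the whole UEA, both sides must be (braided) algebra morphisms between compatible tensor-product algebras, and this is the point where the factorization through $\mathbb{U}(\mathfrak{g})$ becomes essential: since the image of $J_{P_{BF}}$ lies in $\mathbb{U}(\mathfrak{g})$, and since on grades from $\{(0,0),(0,1)\}$ the color function $\theta$ coincides with the ordinary $\mathbb{Z}_2$ super-sign, the $\theta$-braided tensor product on $P_{BF}$ restricts, on the image, to the usual super-tensor product used for $\mathbb{U}(L)$. Consequently $J_{P_{BF}} \otimes J_{P_{BF}}$ is a genuine algebra morphism from $\mathbb{U}(L) \underline{\otimes} \mathbb{U}(L)$ into the $\theta$-braided tensor square $P_{BF} \underline{\otimes} P_{BF}$, the generator identities extend multiplicatively, and one recovers the desired super-Hopf algebra homomorphism. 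This compatibility of braidings is the subtle point that I expect will require the most care in a fully written proof.
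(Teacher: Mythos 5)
Your proposal is correct, and for the first (purely algebraic) half it takes a genuinely different route from the paper. The paper does not factor the map: it verifies the three bracket conditions \eqref{verifparapreal} directly for the images of a homogeneous basis of $L$, expanding $\big[J_{P_{BF}}(X_i),J_{P_{BF}}(X_j)\big]$ etc.\ into quadrilinear expressions, reducing them to trilinears via the identities \eqref{genassocidentit}, and then invoking \eqref{parab-paraf}, \eqref{parabparaf} together with the block structure \eqref{evenmatr}, \eqref{oddmatr} of the graded representation, with the structure constants carried along throughout. You instead isolate the universal statement that the mixed-sign bilinears $\tfrac12\{B_k^+,B_l^-\}$, $\tfrac12[F_\alpha^+,F_\beta^-]$, $\tfrac12\{B_k^+,F_\alpha^-\}$, $\tfrac12\{F_\alpha^+,B_k^-\}$ give a Lie superalgebra morphism $\phi: gl(m|n)\to L_{00}\oplus L_{01}$, and obtain $J_{P_{BF}}|_L=\phi\circ P|_L$, so that arbitrariness of $L$ and of $P$ becomes formal. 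The computational core is the same in both treatments (the trilinear relations, applied case by case, including the vanishing brackets such as the one between two bilinears of type $\{B^+,F^-\}$), and neither you nor the paper writes it out in full; but your factorization buys a reusable, representation-independent lemma (a paraparticle Jordan--Schwinger embedding of $gl(m|n)$ into $P_{BF}$, in the spirit the introduction alludes to) at the cost of having to enumerate all matrix-unit bracket cases, whereas the paper's direct verification keeps the matrix elements $A,B,C,D$ in play and never names the intermediate $gl(m|n)$. For the second half your argument coincides with the paper's: images land in $\mathbb{U}(L_{00}\oplus L_{01})$, \leref{subspaces} identifies this as a super subalgebra on which $\theta$ restricts to the ordinary super sign, \leref{braidgrstrPbf} gives primitivity and $\underline{S}=-\mathrm{id}$ on the bilinears, and the Hopf compatibilities are checked on generators; you add a welcome clarification (that both sides are algebra maps into the same super tensor product, which is why generator-level checks propagate) that the paper leaves implicit under ``straightforward calculations.''
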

\begin{proof}
The non-associative multiplication $\langle .., .. \rangle : L \times L \rightarrow L$ of the Lie superalgebra $L$,
when specified in the even and odd subspaces can be written (inside the $\mathbb{U}(L)$) in detail as follows
\begin{equation} \label{strconstsup}
\begin{array}{c}
\langle  X_{i}, X_{j}  \rangle = \big[ X_{i}, X_{j} \big] = \sum c_{ij}^{k}X_{k} \\
   \\
\langle  X_{i}, Y_{p}  \rangle = \big[ X_{i}, Y_{p} \big] = \sum d_{ip}^{k}Y_{k}  \\
     \\
\langle  Y_{p}, Y_{q}  \rangle = \{ Y_{p}, Y_{q} \} = \sum f_{pq}^{k}X_{k} \\
\end{array}
\end{equation}
for $i, j, p, q = 1,2,...$. The complex numbers $c_{ij}^{k}, d_{ip}^{k}, f_{pq}^{k}$ are usually called the structure constants of $L$
and there is a finite number of terms in the sums of the rhs.

For the first part of the theorem, it is sufficient to show that
\begin{equation} \label{verifparapreal}
\begin{array}{c}
J_{P_{BF}}(\big[ X_{i}, X_{j} \big]) = \big[ J_{P_{BF}}(X_{i}), J_{P_{BF}}(X_{j}) \big]   \\
    \\
J_{P_{BF}}(\big[ X_{i}, Y_{p} \big]) = \big[ J_{P_{BF}}(X_{i}), J_{P_{BF}}(Y_{p}) \big]  \\
   \\
J_{P_{BF}}(\{ Y_{p}, Y_{q} \}) = \{ J_{P_{BF}}(Y_{p}), J_{P_{BF}}(Y_{q}) \}  \\
\end{array}
\end{equation}
for all values of $i, j, p, q = 1,2,...$   \\

If we replace in the rhs of the equations \eqref{verifparapreal} the relations \eqref{JordgenPBFeqeven}, \eqref{JordgenPBFeqodd}, we get relations with mixed quadruplex
commutators and anticommutators. For example the rhs of the first of the equations \eqref{verifparapreal} becomes
\begin{equation} \label{verifparapreal1}
\begin{array}{c}
\big[ J_{P_{BF}}(X_{i}), J_{P_{BF}}(X_{j}) \big] =  \\
         \\
  \frac{1}{4} \sum_{k,l=1}^{m} \sum_{k',l'=1}^{m} A_{kl}(X_{i})
A_{k'l'}(X_{j}) \big[ \ \{ B_{k}^{+}, B_{l}^{-} \}, \{ B_{k'}^{+},
B_{l'}^{-} \} \
\big] +  \\
    \\
  + \frac{1}{4} \sum_{k,l=1}^{m} \sum_{\alpha',\beta'=1}^{n}
A_{kl}(X_{i}) D_{\alpha' \beta'}(X_{j}) \big[ \ \{ B_{k}^{+},
B_{l}^{-} \},
\big[ \ F_{\alpha'}^{+}, F_{\beta'}^{-} \big] \ \big] +   \\
   \\
  + \frac{1}{4} \sum_{k',l'=1}^{m} \sum_{\alpha,\beta=1}^{n}
D_{\alpha \beta}(X_{i}) A_{k'l'}(X_{j})
 \big[ \ \big[
\ F_{\alpha}^{+}, F_{\beta}^{-} \big], \{ B_{k'}^{+}, B_{l'}^{-}
\}
\ \big] +   \\
     \\
  + \frac{1}{4} \sum_{\alpha,\beta=1}^{n}
\sum_{\alpha',\beta'=1}^{n} D_{\alpha \beta}(X_{i}) D_{\alpha'
\beta'}(X_{j})
 \big[ \ \big[
\ F_{\alpha}^{+}, F_{\beta}^{-} \big], \big[ F_{\alpha'}^{+},
F_{\beta'}^{-} \big] \ \big]   \\
\end{array}
\end{equation}
Now we proceed by ``breaking'' the outer commutators in the rhs of the above relation using the following identities
\begin{equation} \label{genassocidentit}
\begin{array}{c}
\{ \ \{ A_{1} , A_{2} \} , \{ A_{3} , A_{4} \} \ \} = \{ A_{1},
\{A_{2}, \{A_{3}, A_{4} \} \ \} \ \} + \big[ A_{2}, \big[A_{1},
\{A_{3}, A_{4} \} \ \big] \ \big] \\
  \\
\big[ \ \{ A_{1} , A_{2} \} , \big[ A_{3} , A_{4} \big] \ \big] =
\{ A_{1}, \big[A_{2}, \big[A_{3}, A_{4} \big] \ \big] \ \} + \{
A_{2},
\big[A_{1}, \big[A_{3}, A_{4} \big] \ \big] \ \}  \\
   \\
\big[ \ \{ A_{1} , A_{2} \} , \{ A_{3} , A_{4} \} \ \big] = \{
A_{1}, \big[A_{2}, \{A_{3}, A_{4} \} \ \big] \ \} + \{ A_{2},
\big[A_{1},
\{A_{3}, A_{4} \} \ \big] \ \} \\
  \\
\big[ \ \big[ A_{1} , A_{2} \big] , \big[ A_{3} , A_{4} \big]
\big] = \big[ A_{1}, \big[A_{2}, \big[A_{3}, A_{4} \big] \ \big] \
\big] - \big[ A_{2}, \big[A_{1}, \big[A_{3}, A_{4} \big] \ \big] \
\big]
\end{array}
\end{equation}
Relations \eqref{genassocidentit} hold identically for any (associative) variables. Applying these identities in the rhs of
\eqref{verifparapreal1} results in trilinear expressions mixing parabosonic and parafermionic degrees of freedom.
Now we apply on these resulting trilinear expressions the equations \eqref{parab-paraf}, \eqref{parabparaf}
which describe the algebra of the relative parabose set.
After lengthy algebraic calculations, and taking into account the properties of the submatrices $A_{m\times m}$, $B_{m\times n}$, $C_{n\times m}$,
$D_{n\times n}$ of the graded matrix representation \eqref{GRADEDREPR}, \eqref{evenmatr}, \eqref{oddmatr} we end up in the lhs \eqref{verifparapreal}.
This completes the first part of the proof, i.e. the fact that the map described by \eqref{JordgenPBFeqeven}, \eqref{JordgenPBFeqodd} is a
realization for the Lie superalgebra.

According now to the second part of the theorem, i.e. the fact that the constructed realization is an homomorphism of super-Hopf algebras
between $\mathbb{U}(L)$ and a suitable $\mathbb{Z}_{2}$-graded subalgebra of $P_{BF}$ we have the following: From the results of \thref{gradstrRelParabSet}, we know
that $\{ B_{k}^{\epsilon}, B_{l}^{\eta} \}$ and $[ F_{\alpha}^{\epsilon}, F_{\beta}^{\eta} ]$ $\forall$
$k, l, \alpha, \beta = 1,2,... \ $ and $\forall$ $\epsilon, \eta = \pm$ span the subspace $L_{00}$ of $L_{\mathbb{Z}_{2} \times \mathbb{Z}_{2}}$,
and $\{ F_{\alpha}^{\epsilon}, B_{k}^{\eta} \}$ $\forall$ $k, \alpha = 1,2,... \ $ and $\forall$ $\epsilon, \eta =
\pm$ span the subspace $L_{01}$ of $L_{\mathbb{Z}_{2} \times \mathbb{Z}_{2}}$. Consequently we have
\begin{equation}  \label{gradPBF2}
\begin{array}{c}
deg(\{ B_{k}^{\epsilon}, B_{l}^{\eta} \}) = deg([ F_{\alpha}^{\epsilon}, F_{\beta}^{\eta} ]) = (0,0)   \\
  \\
deg(\{ F_{\alpha}^{\epsilon}, B_{k}^{\eta} \}) = (0,1)    \\
\end{array}
\end{equation}
But from \leref{subspaces} we know that $L_{00} \oplus L_{01}$ is a Lie superalgebra which is a subalgebra of $L_{\mathbb{Z}_{2} \times \mathbb{Z}_{2}}$.
$L_{00}$ is the even subspace and $L_{01}$ is the odd subspace. Thus, the images $J_{P_{BF}}(X)$ and $J_{P_{BF}}(Y)$ (where: $X$ and $Y$ the even and odd
elements of an homogeneous basis of the arbitrary Lie superalgebra $L$) of the even linear map $J_{P_{BF}}: L \rightarrow P_{BF}$ defined by equations
\eqref{JordgenPBFeqeven}, \eqref{JordgenPBFeqodd}, belong in the $L_{00} \oplus L_{01}$ subalgebra of $L_{\mathbb{Z}_{2} \times \mathbb{Z}_{2}}$. Consequently
$$
J_{P_{BF}}: \mathbb{U}(L) \rightarrow \mathbb{U}(L_{00} \oplus L_{01}) \subset P_{BF}
$$
is an homomorphism of associative superalgebras. Now it suffices to show that the following diagrams are commutative
\small{\begin{equation}
\xymatrix{  \mathbb{U}(L) \ar[r]^{J_{P_{BF}}} \ar[d]_{\Delta_{L}} & \mathbb{U}(L_{00} \oplus L_{01}) \ar[d]^{\underline{\Delta}}\\
            \mathbb{U}(L) \otimes \mathbb{U}(L) \ar[r]^{J_{P_{BF}} \otimes J_{P_{BF}}}  & \mathbb{U}(L_{00} \oplus L_{01}) \otimes \mathbb{U}(L_{00} \oplus L_{01})       }
\end{equation}}
and
\small{\begin{equation}
\begin{array}{ccc}
\xymatrix{ \mathbb{U}(L) \ar[rr]^{J_{P_{BF}}} \ar[dr]_{\varepsilon_{L}} & & \mathbb{U}(L_{00} \oplus L_{01}) \ar[dl]^{\underline{\varepsilon}} \\
               &\mathbb{C}& }
                        & &
\xymatrix{  \mathbb{U}(L) \ar[r]^{J_{P_{BF}}} \ar[d]_{S_{L}} & \mathbb{U}(L_{00} \oplus L_{01}) \ar[d]^{\underline{S}}\\
            \mathbb{U}(L)  \ar[r]^{J_{P_{BF}}}  & \mathbb{U}(L_{00} \oplus L_{01})  }
                         \\
\end{array}
\end{equation}}
or equivalently that the following relations are valid $\forall$ $x \in L$ (the initial Lie superalgebra)
\begin{equation}
\begin{array}{c}
\underline{\Delta} \circ J_{P_{BF}} = (J_{P_{BF}} \otimes J_{P_{BF}}) \circ \Delta_{L}  \\
   \\
\underline{\varepsilon} \circ J_{P_{BF}} = \varepsilon_{L} \\
  \\
\underline{S} \circ J_{P_{BF}} = J_{P_{BF}} \circ S_{L}
\end{array}
\end{equation}
The above  relations can be shown with straightforward calculations on the (homogeneous) generators of the Lie superalgebra $L$.
In the above we denote $\underline{\Delta}$, $\underline{\varepsilon}$ και
$\underline{S}$, the restrictions of the corresponding maps \eqref{brcombrantPbf} of $P_{BF}$ on its $\mathbb{Z}_{2}$-graded subalgebra
$\mathbb{U}(L_{00} \oplus L_{01})$, while $\Delta_{L}$, $\varepsilon_{L}$ and $S_{L}$ are the super-Hopf algebra structure maps of the UEA $\mathbb{U}(L)$
of the initial Lie superalgebra $L$. Thus the proof is complete
\end{proof}

We are now going to present some corollaries of the above theorem, which will indicate its connections with previous results found in the literature.
Let $L$ be any Lie algebra of either finite or infinite dimension. Let $V$, $W$ two finite dimensional vector spaces of dimensions $m$ and $n$
respectively, such that $A : L \rightarrow End(V)$, $D : L \rightarrow End(W)$ are two matrix representations of $L$ i.e. for any $x \in L$, $\ A(x) \ $, $D(x)$
are $m \times m$ and $n \times n$ dimensional matrices respectively. Restricting ourselves to the even part of the previous theorem we get the following
immediate consequence
\begin{corollary}  \colabel{Liealgreal1}
The linear map defined by
\begin{equation} \label{JordgenPBFLieeq}
\begin{array}{c}
 J_{P_{BF}} : L \longrightarrow P_{BF}     \\
                \\
   X \mapsto J_{P_{BF}}(X) = \frac{1}{2} \sum_{k,l=1}^{m}A_{kl}(X) \{ B_{k}^{+}, B_{l}^{-} \} +
\frac{1}{2} \sum_{\alpha,\beta=1}^{n}D_{\alpha\beta}(X) [ F_{\alpha}^{+}, F_{\beta}^{-} ] \\
\end{array}
\end{equation}
for any element of $L$, can be extended to an (ordinary) Hopf algebra homomorphism
$$
J_{P_{BF}}: \mathbb{U}(L) \rightarrow \mathbb{U}(L_{00}) \subset P_{BF}
$$
between the UEA $\mathbb{U}(L)$  of the Lie algebra $L$ and the UEA $\mathbb{U}(L_{00})$
which is a subalgebra of the relative parabose set $P_{BF}$. In other words \eqref{JordgenPBFLieeq} constitutes a paraparticle realization of an arbitrary
Lie algebra.
\end{corollary}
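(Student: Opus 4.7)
The plan is to derive this corollary as a direct specialization of \thref{parapartrealLiesuperalg}, by regarding an arbitrary Lie algebra $L$ as a Lie superalgebra whose odd part is trivial and then checking that, under this specialization, the ``super-Hopf'' statement reduces to an ordinary Hopf algebra statement. I would proceed in three steps.

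First, I would view $L$ as the $\mathbb{Z}_{2}$-graded Lie algebra $L = L_{0} \oplus L_{1}$ with $L_{0} = L$ and $L_{1} = 0$. Taking $V_{0}$ of dimension $m$ to carry the representation $A$ and $V_{1}$ of dimension $n$ to carry $D$, the block-diagonal assignment $X \mapsto \mathrm{diag}(A(X), D(X))$ defines a super-representation $P : L \to \mathrm{End}(V_{0} \oplus V_{1})$ of the form \eqref{evenmatr}; here $B(\cdot)$ and $C(\cdot)$ are trivially zero because $L$ has no odd elements. Substituting this $P$ into \eqref{JordgenPBFeqeven} reproduces \eqref{JordgenPBFLieeq} verbatim, while \eqref{JordgenPBFeqodd} becomes vacuous. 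Hence \thref{parapartrealLiesuperalg} immediately provides an associative algebra homomorphism $J_{P_{BF}} : \mathbb{U}(L) \to P_{BF}$, and since by \thref{gradstrRelParabSet} every element of the form $\{B_{k}^{+}, B_{l}^{-}\}$ or $[F_{\alpha}^{+}, F_{\beta}^{-}]$ spans $L_{00}$, the image in fact lies inside the subalgebra $\mathbb{U}(L_{00}) \subset P_{BF}$.

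Second, I would verify that under this specialization all the braided/super structures on either side collapse to ordinary Hopf-algebraic ones. The source $\mathbb{U}(L)$ has trivial odd component, so its super-Hopf structure coincides with the usual Hopf structure of the UEA of the ordinary Lie algebra $L$. On the target side, every generator of $\mathbb{U}(L_{00})$ sits in grade $(0,0)$ and by \eqref{colfunctRelParSe} one has $\theta((0,0),(0,0)) = 1$; consequently the braided tensor product \eqref{braidmultip} on $\mathbb{U}(L_{00}) \underline{\otimes} \mathbb{U}(L_{00})$ coincides with the ordinary tensor product algebra, the braided coproduct \eqref{brcopr} becomes an ordinary algebra homomorphism into $\mathbb{U}(L_{00}) \otimes \mathbb{U}(L_{00})$, and the braided antipode \eqref{brantantihom} restricts to a genuine algebra antihomomorphism on $\mathbb{U}(L_{00})$. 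Therefore the super/$\theta$-braided Hopf homomorphism supplied by the second part of \thref{parapartrealLiesuperalg} becomes, after these identifications, an honest Hopf algebra homomorphism $\mathbb{U}(L) \to \mathbb{U}(L_{00})$.

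The step I expect to require the most care is precisely this collapse of the braided/super structures to ordinary ones, and in particular ensuring that the three compatibility diagrams already established in the proof of \thref{parapartrealLiesuperalg} for $\underline{\Delta}$, $\underline{\varepsilon}$ and $\underline{S}$ remain meaningful and correct once $\theta$ is restricted to the $(0,0)$ sector. Because all braiding signs become trivial in that restriction, this reduces to unwinding definitions and requires no further structural computation beyond what was already done in \thref{parapartrealLiesuperalg} and \leref{braidgrstrPbf}. With these identifications in place, the corollary follows.
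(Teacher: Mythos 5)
Your proposal is correct and follows essentially the same route as the paper, which obtains this corollary precisely by viewing $L$ as a Lie superalgebra with trivial odd part and ``restricting to the even part'' of \thref{parapartrealLiesuperalg}, with the image landing in $\mathbb{U}(L_{00})$ where $\theta((0,0),(0,0))=1$ trivializes the braiding. Your explicit check that the super/braided Hopf structures collapse to ordinary ones on both sides is exactly the unwinding the paper leaves implicit when it calls the statement an immediate consequence.
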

The next two are direct consequences of the above
\begin{corollary}  \colabel{Liealgreal2}
The linear map defined by
\begin{equation} \label{JordgenPBLieeq}
\begin{array}{c}
 J_{P_{B}} : L \longrightarrow P_{B}     \\
                \\
   X \mapsto J_{P_{B}}(X) = \frac{1}{2} \sum_{k,l=1}^{m}A_{kl}(X) \{ B_{k}^{+}, B_{l}^{-} \}  \\
\end{array}
\end{equation}
for any element of $L$, can be extended to an (ordinary) Hopf algebra homomorphism
\begin{equation} \label{lastmin}
J_{P_{B}}: \mathbb{U}(L) \rightarrow P_{B}
\end{equation}
between the UEA $\mathbb{U}(L)$  of the Lie algebra $L$ and the parabosonic algebra
$P_{B}$. In other words \eqref{JordgenPBLieeq} constitutes a paraboson realization of an arbitrary Lie algebra.
\end{corollary}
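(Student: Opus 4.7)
The plan is to obtain \coref{Liealgreal2} as an immediate specialization of \coref{Liealgreal1}. I would first observe that formula \eqref{JordgenPBLieeq} is precisely what \eqref{JordgenPBFLieeq} becomes upon choosing $n = 0$, i.e.\ taking the trivial, zero-dimensional parafermionic block in the graded representation. With this choice, the submatrix $D$ is empty and the second sum in \eqref{JordgenPBFLieeq} vanishes identically, so the two formulas coincide.

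Under this specialization, the target subalgebra $\mathbb{U}(L_{00}) \subset P_{BF}$ appearing in \coref{Liealgreal1} reduces to the subalgebra generated by the elements $\{B_{k}^{+}, B_{l}^{-}\}$ alone, since no $[F_{\alpha}^{+}, F_{\beta}^{-}]$ terms are ever produced as images. This subalgebra sits naturally inside the parabosonic algebra $P_{B} \subset P_{BF}$ (the subalgebra generated by the $B_{k}^{\pm}$), and the homomorphism \eqref{lastmin} is obtained simply by composing the homomorphism of \coref{Liealgreal1} with this inclusion. The associative algebra homomorphism property is therefore inherited directly.

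For the Hopf-algebra claim I would argue as follows. By \thref{gradstrRelParabSet} every polynomial $\{B_{k}^{+}, B_{l}^{-}\}$ has grade $(0,0)$, so every image $J_{P_{B}}(X)$ lies in the even-even sector of the $(\mathbb{Z}_{2} \times \mathbb{Z}_{2})$-grading. The color function $\theta$ given in \eqref{colfunctRelParSe} is identically $+1$ on this sector, so the braided coproduct, counit, and antipode of \leref{braidgrstrPbf}, when restricted to the image, degenerate to their ordinary unbraided analogues. Since $\mathbb{U}(L)$ for a plain Lie algebra is itself an ordinary Hopf algebra (its odd part being trivial), the super-Hopf compatibility diagrams that appear at the end of the proof of \thref{parapartrealLiesuperalg} collapse to the usual Hopf diagrams, yielding the desired ordinary Hopf algebra homomorphism. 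I do not anticipate any genuine obstacle here: the entire argument is a clean specialization, the only routine point being the verification that the braiding trivializes on the image, which is automatic from the grading.
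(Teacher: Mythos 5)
Your proposal is correct and follows the route the paper itself intends: \coref{Liealgreal2} is presented there as a ``direct consequence'' of \coref{Liealgreal1}, obtained exactly by suppressing the parafermionic block, and the Hopf statement rests on the image sitting in a degree-zero sector. The only real difference is where the Hopf structure is taken from: you argue inside $P_{BF}$, using \thref{gradstrRelParabSet} and \leref{braidgrstrPbf} together with the fact that the color function \equref{colfunctRelParSe} is identically $+1$ on the $(0,0)$-component, so the braided coproduct, counit and antipode restrict to ordinary ones on the image; the paper instead invokes the known super-Hopf structure of the paraboson algebra $P_{B}$ itself (citing \cite{KaDa}) and the observation that the images of \equref{lastmin} lie in its even subspace, which is an ordinary Hopf subalgebra. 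Both arguments are sound and amount to the same degree-zero observation. One small point of care in your version: the target of the corollary is the abstractly presented algebra $P_{B}$ (first relations of \equref{parab-paraf}), so ``composing with the inclusion $P_{B}\subset P_{BF}$'' is stated in the wrong direction -- what you need is a corestriction of the map of \coref{Liealgreal1} to the subalgebra of $P_{BF}$ generated by the $B_{k}^{\pm}$, together with the identification of that subalgebra with $P_{B}$ (true, but resting on a PBW-type embedding for the color Lie algebra that the paper does not prove). The cleaner fix, and what the paper's phrasing implicitly does, is to note that the verification of the bracket conditions \equref{verifparapreal} for the purely parabosonic images uses only the parabose relations, so the whole computation can be run directly in $P_{B}$ without ever passing through $P_{BF}$.
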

In the above, the paraboson algebra $P_{B}$ is defined by generators and relations by the first of the relations \eqref{parab-paraf}. This algebra is well known to be
a super-Hopf algebra \cite{KaDa}, and it is easy to see that the images of any $x \in \mathbb{U}(L)$ through \eqref{lastmin}, lie entirely in its even subspace, which
constitutes an ordinary Hopf subalgebra.
\begin{corollary}  \colabel{Liealgreal3}
The linear map defined by
\begin{equation} \label{JordgenPFLieeq}
\begin{array}{c}
 J_{P_{F}} : L \longrightarrow P_{F}     \\
                \\
   X \mapsto J_{P_{F}}(X) = \frac{1}{2} \sum_{\alpha,\beta=1}^{n}D_{\alpha\beta}(X) [ F_{\alpha}^{+}, F_{\beta}^{-} ] \\
\end{array}
\end{equation}
for any element of $L$, can be extended to an (ordinary) Hopf algebra homomorphism
$$
J_{P_{F}}: \mathbb{U}(L) \rightarrow P_{F}
$$
between the UEA $\mathbb{U}(L)$  of the Lie algebra $L$ and the parafermionic algebra $P_{F}$.
In other words \eqref{JordgenPFLieeq} constitutes a parafermion realization of an arbitrary Lie algebra.
\end{corollary}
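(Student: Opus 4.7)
The plan is to deduce this corollary directly from \coref{Liealgreal1}. First I would specialize \coref{Liealgreal1} to the case where the first representation is the zero-dimensional one (i.e.\ $m = 0$, so the matrix $A(X)$ is empty for every $X$). The parabosonic summand of \equref{JordgenPBFLieeq} then vanishes identically and what remains is precisely the map \equref{JordgenPFLieeq}. By \coref{Liealgreal1} this linear map already extends to an associative algebra homomorphism $J_{P_F}: \mathbb{U}(L) \to P_{BF}$; no separate verification of the commutator identities $J_{P_F}([X,Y]) = [J_{P_F}(X), J_{P_F}(Y)]$ is needed.

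Next I would identify the codomain. The subalgebra of $P_{BF}$ generated by the parafermionic operators $F_\alpha^{\pm}$ alone is, by inspection of the defining relations \equref{parab-paraf} and \equref{parabparaf}, canonically isomorphic to the parafermionic algebra $P_F$: the mixed relations of \equref{parabparaf} impose no new constraints on parafermion-only polynomials. Since every value of $J_{P_F}$ is a $\mathbb{C}$-linear combination of brackets $[F_\alpha^+, F_\beta^-]$, the image is contained in $P_F$; more precisely, by \thref{gradstrRelParabSet} each such bracket lies in the even subspace $L_{00}$, so $J_{P_F}(\mathbb{U}(L))$ sits inside the even part of $P_F$.

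For the Hopf-algebra compatibility I would appeal to \leref{braidgrstrPbf}, noting that the restrictions of $\underline{\Delta}$, $\underline{\varepsilon}$, $\underline{S}$ to the parafermion subalgebra realize the standard super-Hopf structure on $P_F$ (with each $F_j^{\pm}$ primitive and antipodally negated, analogously to the paraboson case \cite{KaDa}). Restricting further to the even subalgebra, the braiding factor $\theta$ of \equref{braidmultip} evaluates trivially on degree-$(0,0)$ tensors, so what one obtains is an \emph{ordinary} (un-twisted) Hopf-algebra structure on the even part of $P_F$. Checking the three compatibility diagrams $\underline{\Delta} \circ J_{P_F} = (J_{P_F} \otimes J_{P_F}) \circ \Delta_L$, $\underline{\varepsilon}\circ J_{P_F} = \varepsilon_L$, and $\underline{S}\circ J_{P_F} = J_{P_F}\circ S_L$ on a generator $X \in L$ then reduces to the primitivity of each $[F_\alpha^+, F_\beta^-]$ and the $\mathbb{C}$-linearity of $J_{P_F}$; this is the same kind of straightforward generator-level calculation as in the second half of the proof of \thref{parapartrealLiesuperalg}.

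The only subtle point I expect is the last one: confirming that the output is really an \emph{ordinary} Hopf-algebra homomorphism rather than only a super-Hopf one. The resolution is the observation that $L$ is a bona fide Lie algebra (viewed as a Lie superalgebra concentrated in even degree) and its image lies wholly in the even subspace of $P_F$, where the $\theta$-braiding is trivial; thus the ``$\mathbb{Z}_2$-graded'' and ``ordinary'' Hopf structures coincide on the relevant subalgebra.
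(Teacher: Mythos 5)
Your overall route---specializing \coref{Liealgreal1} to the degenerate case $m=0$ so that only the parafermionic summand of \eqref{JordgenPBFLieeq} survives, and then checking the three Hopf compatibilities on the generators of $L$---is essentially the reduction the paper intends when it calls this corollary a direct consequence of \coref{Liealgreal1}. The one step you treat too lightly is the identification of the codomain. You establish the bracket identity $J_{P_F}([X,Y])=[J_{P_F}(X),J_{P_F}(Y)]$ inside $P_{BF}$ and then transport it to $P_F$ via the claim that the subalgebra of $P_{BF}$ generated by the $F_{\alpha}^{\pm}$ is canonically isomorphic to $P_F$, justified ``by inspection of the defining relations.'' That is not an inspection-level fact: for an algebra presented by generators and relations, the subalgebra generated by a subset of the generators need not be isomorphic to the algebra presented by the corresponding subset of relations, since the mixed relations \eqref{parabparaf} could in principle force new identities among parafermion-only words. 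The embedding $P_F\hookrightarrow P_{BF}$ does hold, but its justification requires something like the PBW theorem for the $(\mathbb{Z}_{2}\times\mathbb{Z}_{2},\theta)$-Lie algebra of \thref{gradstrRelParabSet} (the parafermion generators and their brackets span a graded subalgebra, and the UEA of a subalgebra embeds into the UEA of the whole), not inspection. Alternatively---and this is the cheaper fix, and the reason the corollary really is ``direct''---you can avoid $P_{BF}$ altogether: in the proof of \thref{parapartrealLiesuperalg} the verification of the purely parafermionic ($D$--$D$) terms, after breaking the quadrilinear brackets with \eqref{genassocidentit}, uses only the parafermionic trilinear relations, i.e.\ the second line of \eqref{parab-paraf}, which are exactly the defining relations of $P_F$; hence the computation can be run verbatim inside $P_F$ and no embedding statement is needed.

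On the Hopf side, your parenthetical that the parafermionic subalgebra inherits ``the standard super-Hopf structure\ldots analogously to the paraboson case \cite{KaDa}'' is the wrong characterization: by \eqref{gradPBF1} the parafermion generators have degree $(1,1)$ and $\theta((1,1),(1,1))=+1$ by \eqref{colfunctRelParSe}, so the restriction of the braided structure of \leref{braidgrstrPbf} to the parafermionic part is already an \emph{ordinary} (untwisted) Hopf structure---precisely the known ordinary Hopf structure of $P_F$ that the paper invokes via \cite{KaDa2}; the genuine sign that appears for parabosons is exactly what is absent here. Your conclusion survives because you immediately pass to the degree-$(0,0)$ part, where the two readings coincide, but the detour through the even subalgebra is unnecessary: $P_F$ itself is an ordinary Hopf algebra, so the corollary with codomain $P_F$ needs no such restriction, and the generator-level checks of $\underline{\Delta}\circ J_{P_F}=(J_{P_F}\otimes J_{P_F})\circ\Delta_{L}$, $\underline{\varepsilon}\circ J_{P_F}=\varepsilon_{L}$ and $\underline{S}\circ J_{P_F}=J_{P_F}\circ S_{L}$ go through as you describe.
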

In the above, the parafermion algebra $P_{F}$ is defined by generators and relations by the second of the relations \eqref{parab-paraf}. This algebra is well known
\cite{KaDa2} to be an ordinary Hopf algebra.

\coref{Liealgreal2} and \coref{Liealgreal3}  constitute generalizations of previous results of the first two authors of this paper,
on paraparticle realizations of finite dimensional Lie algebras: If we consider a finite dimensional Lie algebra and in place
of the matrix representations $A$, $D$ we take the adjoint representation of the Lie algebra (i.e.: the matrix elements will be the structure constants), then
the corresponding Lie algebra paraparticle realizations of \cite{DaKa} are immediately reproduced as special cases of \coref{Liealgreal2} and \coref{Liealgreal3} respectively.
However these corollaries provide a much richer class of realizations and are applicable for infinite dimensional Lie algebras as well.

On the other hand, the realizations provided at \thref{parapartrealLiesuperalg} can be considered as generalizing the Lie superalgebra particle realizations
provided in \cite{Pal17, ChangS}: The usual bosonic and fermionic number operators used in these works
are replaced by parabosonic and parafermionic number operators in \thref{parapartrealLiesuperalg}. Furthermore our use of a mixed paraparticle system which was
shown in \leref{subspaces} to have the structure of a braided group, enables the realization to be extended as an homomorphism of braided groups or
more specifically as an homomorphism of super-Hopf algebras, a feature which is not achieved in works as \cite{Pal17, ChangS} where the target algebra
consists of particles (bosons and fermions), since a braided group structure (or any other kind of Hopf structure) is not available for such particle algebras.

\section{Discussion - Prospects}

An extended discussion of the $\mathbb{Z}_{2} \times \mathbb{Z}_{2}$-graded structure and the properties of the Relative parabose set $P_{BF}$ has been provided
in \seref{relparabset}, together with a rigorous formulation of previous results. We have extracted the $\theta$-braided group structure of $P_{BF}$ as a consequence of this
discussion and we have investigated some of its subalgebras.

The results of \seref{newresults} provide a totally new class of realizations for an arbitrary Lie superalgebra of either finite or infinite dimension. These
are paraparticle realizations and they generalize previous results of the first two authors of this paper \cite{DaKa}
(on paraparticle realizations of finite dimensional Lie algebras) and of other authors \cite{Pal17, ChangS} (on particle realizations of Lie superalgebras) as well.
The fact that the paraparticle realizations constructed in this paper have the strong property of being homomorphisms of super-Hopf algebras has important consequences:
The powerful technique of creating tensor product representations is at our disposal. The (braided) tensor products of representations of the target algebra can be
directly utilized to create new representations of the Lie superalgebra. Consequently, realizations possessing the property of relating the Hopf structure of the algebras
as well (and not only their multiplicative structure) provide more solid links between their representation theories.
It is worth noticing at this point that the relative parabose set has been used once, in \cite{BiswSo} (although without being named)
in the construction of a specific realization of the finite dimensional orthosymplectic Lie superalgebra. However in this work, there is no mention of wether
the obtained map is a super-Hopf homomorphism or not.   \\

Before closing this work, let us present a rough outline of some ideas on virtual applications of the above presented constructions.
We will discuss about two classes of possible applications, which however lie in totally different directions of research:
The first one has to do with pure mathematics and more specifically with the representation theory of the infinite dimensional Lie (super)algebras, while the second one
stems from some attempts, of the first and the third authors of this paper, to determine a specific class of solutions of a model widely used in elementary particle and
nuclear physics as well, the so-called Skyrme Model \cite{Sky}.

The relative parabose set algebra has a totally unexplored representation theory. The Fock-like representations of such an algebra have been defined in \cite{GreeMe}
but have never been explicitly constructed due to obvious computational difficulties introduced by the nature of the ternary relations. In recent works \cite{KaDa3},
\cite{LiStVdJ, Stoil} different ideas for confronting such difficulties have been introduced for the case of the parabosonic algebras. Such ideas can be generalized for the case
of the relative parabose set, and possibly lead to the construction of its Fock-like representations. In such a case, \thref{parapartrealLiesuperalg} will be a
powerful tool which will enable us to proceed in developments in the representation theory of Lie superalgebras, a topic which
-at least in the infinite dimensional case- is virtually unexplored.

The $SU(N)$, $(N \geq 2)$ Skyrme Model \cite{Sky} has proved to be one of the most effective Lagrangians
for the description of strong interactions at the low energy limit of Quantum Chromodynamics (QCD). However its Euler-Lagrange equations are strongly non-linear,
and despite the wealth of physics they describe, have never been solved analytically (at least in the form of some general solution). In \cite{AKP1},
the first and the third authors of this paper, introduce a method which allows the detection of non-linear symmetries of the model.
In the same work, necessary conditions are computed, under which such symmetries may lead to further integration of the equations of
the model. One of our main initial motivations for studying (super)algebra realizations via paraparticles, stems from trying to extend the ideas of \cite{AKP1}, via imposing
``parafermionic'' conditions of the form
\begin{equation} \label{parafermionlikealg}
[[L_{\mu}, L_{\nu}], L_{\rho}] = f_{\mu\nu\rho}^{\sigma}L_{\sigma}
\end{equation}
($f_{\mu\nu\rho}^{\sigma} \in \mathbb{R} \ or \ \mathbb{C}$) between the Maurer-Cartan covariant vectors $L_{\mu}$, of the Euler-Lagrange equations of the Skyrme Model,
enforcing them to be the generators of a parafermion-like algebra. ``Realizing'' the Maurer-Cartan covariant vectors $L_{\mu}$ in terms of some paraparticle algebra,
may prove a helpful intermediate step in determining the constants $f_{\mu\nu\rho}^{\sigma}$. The next step will be the computation of the restrictions imposed on the forms of the fields
$U(x)$, implied by the conditions \eqref{parafermionlikealg}. Finally, if the produced forms will prove acceptable, we will proceed in computing the representations of the
parafermion-like generated algebra and the physics produced on these representations.

All these are parts of different ongoing projects and will constitute the subject of future works.

{\footnotesize\section*{Acknowledgements}

KK would like to express his gratitude to the whole Institute of Physics and Mathematics (IFM), of the University of Michoacan (UMSNH), for providing a stimulating
atmosphere while this work was in progress. He is also grateful for a grant for postdoctoral studies provided by \textsc{Conacyt} under the Project No. J60060.
The research of AHA was supported by grants CIC–4.16 and CONACYT J60060; he is also grateful to SNI.}


\begin{thebibliography}{99}
\scriptsize{


\bibitem{BiswSo} S. N. Biswas, S. K. Soni,
``\emph{Supersymmetry, Parastatistics and operator realizations of
a Lie algebra}", J. Math. Phys., v.\textbf{29}, (1), 1988, p.16-20

\bibitem{Ca} W.J. Caspers, Spin systems, \textsc{World scientific}, 1989

\bibitem{Cornw} J.F. Cornwell, Group Theory in Physics, vol. III, \textsc{Academic Press}, 1989

\bibitem{DaKa} C. Daskaloyannis, K. Kanakoglou, I. Tsohantjis,
``\emph{Hopf algerbaic structure of the parabosonic and
parafermionic algebras and paraparticle generalization of the
 Jordan-Schwinger map}", J. Math. Phys., v.\textbf{41}, 2,
2000, p.652-660 // e-print: arXiv:math-ph/9902005v2

\bibitem{FrSorSc1} L. Frappat, P. Sorba, A. Sciarrino, Dictionary on Lie Algebras and
Superalgebras, \textsc{Academic Press}, 2000 // eprint: hep-th/9607161

\bibitem{AKP1} A. Herrera-Aguilar, K. Kanakoglou, J.E. Paschalis, ``\emph{Towards the establishment of
non-linear hidden symmetries of the Skyrme model}'', \textsc{Aip} Conf. Proc.,
v.\textbf{857}, 1, 2006, p.194-201

\bibitem{Hong} Hong-Chen Fu, ``\emph{Inhomogeneous
differential realization, boson-fermion realization of Lie
superalgebras and their indecomposable representations}", J. Math.
Phys. v.\textbf{32}, 3, 1991, p.767-775

\bibitem{GreeMe} O.W. Greenberg, A.M.L. Messiah, ``\emph{Selection rules
for parafields and the absence of paraparticles in nature}", Phys.
Rev., v.\textbf{138}, 5B, 1965, p.1155-1167

\bibitem{Iache} F. Iachello, Lie algebras and applications,
Lect. Notes Phys., v.\textbf{708}, \textsc{Springer}, 2006

\bibitem{Jor} P. Jordan, ``\emph{Der Zusammenhang der symmetrischen und linearen Gruppen und das
Mehrk$\ddot{o}$rperproblem}", Z. Physik, v.\textbf{94}, 1935, p.531-535

\bibitem{KaDa} K. Kanakoglou, C. Daskaloyannis, ``\emph{Variants of bosonization in
Parabosonic algebra: the Hopf and super-Hopf structures in Parabosonic Algebra}",
J. Phys. A:Math. Theor., v.\textbf{41}, 2008, p.105203 // e-print: arXiv:0902.1138v1 [math-ph]

\bibitem{KaDa2} K. Kanakoglou, C. Daskaloyannis, ``\emph{Bosonisation and
parastatistics: an example and an alternative approach}", at:
Generalized Lie Theory, in Mathematics, Physics and beyond, eds:
S. Silvestrov, E. Paal, V. Abramov, A. Stolin, \textsc{Springer}, 2008, p.207-218 //
e-print: arXiv:math-ph/0701023

\bibitem{KaDa3} K. Kanakoglou, C. Daskaloyannis, ``\emph{A braided look at Green ansatz
for Parabosons}", J. Math. Phys. v.\textbf{48}, 2007, p.113516 // e-print: arXiv:0901.4320v1 [math-ph]

\bibitem{Kakw} S.J. Kang, J.H. Kwon, ``\emph{Graded Lie superalgebras,
supertrace formula, and orbit Lie superalgebras}", Proc. London Math. Soc.,
v.\textbf{81}, 3, 2000, p.675-724

\bibitem{KlMa} A. Klein, E.R. Marshalek,``\emph{Boson
realizations of Lie algebras with applications to nuclear
physics}", Rev. of Mod. Phys., v.\textbf{63}, 2,
1991, p.375-558

\bibitem{LiStVdJ} S. Lievens, N. I. Stoilova, Van der Jeugt,
``\emph{The paraboson Fock space and unitary irreducible
representations of the Lie superalgebra $osp(1/2n)$}",
Commun. Math. Phys., v.\textbf{281}, 2008, p.805-826

\bibitem{Maj} S. Majid, ``Foundations of Quantum Groups Theory", \textsc{Cambridge University Press}, 1995

\bibitem{Mon} S. Montgomery, Hopf Algebras and their Actions on Rings,
Regional Conference Series in Mathematics, 82, \textsc{AMS, NSF-CBMS}, De Paul university Chicago, 1992

\bibitem{Pal17} T.D. Palev, ``\emph{Canonical realizations of
Lie superalgebras: Ladder representations of the Lie superalgebra
$A(m,n)$}", J. Math. Phys., v.\textbf{22}, 10, 1981, p.2127-2131

\bibitem{Scheu} M. Scheunert, ``\emph{Generalized Lie
algebras}", J. Math. Phys., v.\textbf{20}, 4, 1979, p.712-720

\bibitem{Scheu1} M. Scheunert, ``\emph{Universal $R$-matrices
for finite abelian groups - a new look at graded multilinear
algebra}", arXiv:q-alg/9508016v1

\bibitem{Scheu2} M. Scheunert, ``\emph{The theory of Lie superalgebras}",
 Lect. Not. Math., v.\textbf{716}, \textsc{Springer}, 1978, p.1-270.

\bibitem{Schw} J. Schwinger, 1952 unpublished work reprinted in: Selected papers in
quantum theory of angular momentum, eds L.C. Biedenharn, H. van Dam, \textsc{Academic Press}, 1965

\bibitem{Sky} T.H.R. Skyrme, ``\emph{A Unified Field Theory of Mesons and Baryons}",
Nucl. Phys., v.\textbf{31}, 1962, p.556-569

\bibitem{Stoil} N.I. Stoilova, J. Van der Jeugt, ``\emph{Parafermions,
parabosons and representations of so($\infty$) and osp(1/$\infty$)}",
Int. J. Math. v.\textbf{20}, 6, 2009, p.693-715 // e-print: arXiv:0801.3909v1 [hep-th], 25 Jan 2008

\bibitem{ChangS} Chang-Pu Sun, ``\emph{Boson-Fermion
realisation of indecomposable representations for Lie
superalgebras}", J. Phys. A:Math. Gen. v.\textbf{20}, 1987, p.5823-5829

\bibitem{Tang} D.S Tang, ``\emph{Formal relations between
classical superalgebras and fermion-boson creation and annihilation
operators}", J. Math. Phys., v.\textbf{25}, 10, 1984

\bibitem{Ya1} W. Yang, Sicong Jing, ``\emph{Graded Lie algebra generating
of parastatistical algebraic structure}", Commun. in theor. phys.
v.\textbf{36}, 6, 2001, p.647-650 // e-print: arXiv:math-ph/0212009v1 3 Dec 2002

\bibitem{Ya2} W. Yang, Sicong Jing, ``\emph{A new kind of graded
Lie algebra and parastatisical supersymmetry}", Science in China
(Series A), v.\textbf{44}, 9, 2001, p.1167-1173 // e-print: arXiv:math-ph/0212004v1 2 Dec 2002

}

\end{thebibliography}
\end{document}